\newtheorem{lemma}{Lemma}
\newtheorem{proposition}{Proposition}
\newtheorem{theorem}{Theorem}
\newtheorem{observation}{Observation}
\newtheorem*{theorem*}{Theorem}
\newtheorem*{proposition*}{Proposition}
\newtheorem{openprob}{Open Problem}
\newtheorem*{openprob*}{Open Problem}
\theoremstyle{definition}
\newtheorem{definition}{Definition}
\theoremstyle{remark}
\newtheorem{example}{Example}
\newif\ifshowcomments
\newcommand{\docomment}[3]{\ifshowcomments \textcolor{#1}{[ #2 : #3 ]} \fi}
\newcommand{\bo}[1]{\docomment{brown}{Bo}{#1}}
\newcommand{\robin}[1]{\docomment{blue}{Robin}{#1}}
\DeclareMathOperator*{\E}{\mathbb{E}}
\newcommand{\reals}{\mathbb{R}}
\newcommand{\Welf}{\mathrm{Welfare}}
\newcommand{\Opt}{\textsc{Opt}}
\newcommand{\D}{\mathcal{D}}
\begin{document}

\title{High-Welfare Matching Markets via Descending Price}
\author[1]{Robin Bowers}
\author[1]{Bo Waggoner}
\affil[1]{University of Colorado, Boulder}

\maketitle

\begin{abstract}
  We consider design of monetary mechanisms for two-sided matching.
  Mechanisms in the tradition of the deferred acceptance algorithm, even in variants incorporating money, tend to focus on the criterion of stability.
  Instead, in this work we seek a simple auction-inspired mechanism with social welfare guarantees.
  We consider a descending-price mechanism called the Marshallian Match, proposed (but not analyzed) by \citet{waggoner2019matching}.
  When all values for potential matches are positive, we show the Marshallian Match with a ``rebate'' payment rule achieves constant price of anarchy.
  This result extends to models with costs for acquiring information about one's values, and also to matching on hypergraphs.
  With possibly-negative valuations, which capture e.g. job markets, the problem becomes harder.
  We introduce notions of approximate stability and show that they have beneficial welfare implications.
  However, the main problem of proving constant factor welfare guarantees in ``ex ante stable equilibrium'' remains open.
\end{abstract}

%
%
%
%
%

\section{Introduction}
A primary goal of designing mechanisms is to coordinate groups to arrive at collectively good allocations or outcomes.
For example, in auctioning a set of items to unit-demand buyers, the problem is to coordinate among the varied preferences of the buyers to achieve an overall good matching of buyers to items.
In such auction settings, ``good'' is usually formalized via \emph{price of anarchy}~\citep{roughgarden2017price}: in any equilibrium of the auction game, the expected social welfare (total utility) should be approximately optimal.

Matching people to people, with preferences on both sides, appears to require even more coordination.
\citet{gale1962college} introduced the foundational deferred-acceptance algorithm -- a matching mechanism without money -- for participants with ordinal preferences.
The key ``good'' property it achieves (if all participants are truthful) is \emph{stability}: no pair prefers to switch away from the given matching and match to each other instead.
Variants of deferred acceptance have had significant impact in applications from kidney exchange to the National Residency Matching Program (NRMP) for doctors and hospitals~(e.g. \citet{iwama2008survey}).

We are motivated by two drawbacks of deferred-acceptance-style approaches.
First, the \emph{social welfare} generated by such mechanisms is unclear, even in settings where money is explicitly modeled such as matching with contracts~\citep{hatfield2005matching}.
While their criterion of stability is a nice property, its relationship to welfare is not obvious.
We would like to investigate this relationship and obtain explicit welfare guarantees.

Second, it is unclear the extent to which such mechanisms are compatible with \emph{inspection} stages in which participants must invest effort to discover their preferences.
For example, in practice, the design of the NRMP requires relatively expensive and time-constrained interviews, which must be completed before matching begins.
While such concerns have motivated significant work on information acquisition in matching markets, particularly variants of deferred acceptance~(e.g. \cite{immorlica2021designing}; see Section \ref{sec:related-work}), to our knowledge none of it incorporates monetary mechanisms with quantitative welfare guarantees.
On the other hand, prior work of \cite{kleinberg2016descending} has shown that even in the special case of matching people to items (which have no preferences), approximately optimal welfare \emph{requires} a market design with dynamically interspersed matching and information acquisition.
\cite{kleinberg2016descending} showed that descending-price mechanisms tend to be compatible with costly inspection stages and still yield high social welfare, due to a connection with the Pandora's box problem~\citep{weitzman1979optimal}.

\paragraph{The $1/4$-rebate Marshallian Match.}
Inspired by \citet{kleinberg2016descending}, \citet{waggoner2019matching} propose the ``Marshallian Match'' (MM) for two-sided matching with money.
In \citet{marshall1920principles}, its namesake describes a theory of market clearing in which the buyer-seller matches that generate the largest surplus -- i.e. the most net utility between the pair -- occur first, and so on down.\footnote{This dynamic eventually leads to the market clearing price (e.g. \citet{plott2013marshall}), after which no more positive-surplus matches are possible. Indeed, in a simple commodity market it is possible to ignore Marshall's dynamics and focus on the calculation of the clearing price. But in a more complex two-sided matching problem, we appear to require dynamics in order to properly coordinate matches.}
A similar dynamic is observed in decentralized matching markets~\cite{chade2017sorting}, yet it does not directly underly standard centralized designs such as deferred acceptance.

Similarly, the MM begins with a high price that descends over time.
Participants maintain a bid on each of their potential matches, with the sum of the two bids ideally representing the total surplus generated by the match.
When the price reaches the sum of any pair's bids on each other, that pair is matched.
They pay their bids and drop out of the mechanism, which continues.
In the ``$1/4$-rebate'' variant studied in this paper, the mechanism only keeps half of the sum of the bids and the participants split the other half, each receiving a rebate of $1/4$ of the sum of the bids.
\citet{waggoner2019matching} speculate on the dynamics, strategy, and benefits of this mechanism and variants,\robin{in the absence of rebates\bo{added ``and variants''. I recall mentioning rebates somewhere in there.}} but do not obtain theoretical results.

\subsection{Our results}

\paragraph{Nonnegative values.}
We first consider a setting where all participants' values are nonnegative.
Under this restriction, we show a general \emph{price of anarchy (PoA)} guarantee for the Marshallian Match, i.e. in any Bayes-Nash equilibrium the expected social welfare is within a constant factor of the optimal possible.
The positive-bids setting can model, for example, matching of industrial plants to geographic areas (an application of \cite{koopmans1957assignment}) or matching local businesses to municipal-owned locations.

Next, we extend the result to a \emph{group formation} setting, which also models \emph{matching on hypergraphs}.
Agents must be partitioned into subsets of size at most $k$, with private valuations for joining each possible subset.
We modify the MM by clearing a subset when the price reaches the sum of its bids.
We obtain a $\Omega(1/k^2)$ price of anarchy for this problem.

Next, we show that the welfare guarantee also extends to a model with \emph{information acquisition costs}.
Often, participants do not initially know their values for a potential match.
It requires time, effort, and/or money to investigate and discover one's value.
A good mechanism should carefully coordinate these investigations to happen at appropriate times, or else significant welfare will be lost: participants will either waste too much utility on unnecessary inspections, or they will forego valuable matches due to the inspection cost and uncertainty about the match.
Here, although the optimal first-best is unknown and likely NP-hard, we obtain the same PoA guarantee.

\begin{theorem*}
  With nonnegative values, the $1/4$-rebate Marshallian Match has the following guarantees:
  \begin{enumerate}
    \item For matching on general graphs, a Bayes-Nash price of anarchy of at least $1/8$.
    \item For matching on hypergraphs with group size at most $k$, a Bayes-Nash price of anarchy of at least $\tfrac{1}{2k^2}$.
    \item For matching on general graphs with inspection costs, a Bayes-Nash price of anarchy of at least $1/8$.
  \end{enumerate}
\end{theorem*}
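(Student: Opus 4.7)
The plan is to apply the smoothness framework of \citet{roughgarden2017price}, which reduces a Bayes-Nash PoA bound to exhibiting, for each player $i$ and value profile, a type-dependent unilateral deviation $\mathrm{dev}_i$ whose utility against any opponent profile $s_{-i}$ satisfies $u_i(\mathrm{dev}_i, s_{-i}) \geq \lambda\, v_i^* - \mu \cdot (\text{payments in } s)$, where $v_i^*$ is the value of $i$'s optimal assignment. Summing over $i$, taking expectations, and using that payments lower-bound welfare then yields a PoA of order $\lambda/(1+\mu)$. The three parts of the theorem share this skeleton and differ only in the deviation and in how the descending-price dynamics are coupled to it.

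\textbf{Part 1.} Let $j^*(i)$ be $i$'s partner in the optimal matching $M^*$, with $v_i^* := v_{i,j^*(i)}$. Take $\mathrm{dev}_i$ to bid $v_i^*/2$ on $j^*(i)$ and $0$ on every other potential partner. Run MM with $(\mathrm{dev}_i, s_{-i})$. Because $i$'s bid on every other partner is $0$, the descending-price dynamics up to the moment $i$ could clear with $j^*(i)$ coincide with the $s_{-i}$ subgame that ignores $i$. Two cases arise: (a) both $i$ and $j^*(i)$ are still unmatched when the price reaches $v_i^*/2 + b_{j^*(i),i}(s)$, and they clear, giving $i$ utility at least $v_i^*/2 - b_{j^*(i),i}(s)$; or (b) $j^*(i)$ cleared earlier with some $k$, at a price $\geq v_i^*/2 + b_{j^*(i),i}(s)$, so $j^*(i)$'s payment already charged to welfare is at least that much. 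Summing over $i$, using $\sum_i v_i^* = 2\Welf(\Opt)$, and bounding cross-terms by equilibrium utilities and payments gives the constant $1/8$.

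\textbf{Part 2.} The hypergraph version is analogous, but clearing of a group $G^*\ni i$ needs all $|G^*|\leq k$ members present when the price crosses $\sum_{i'\in G^*} b_{i',G^*}$. Take $\mathrm{dev}_i$ to bid $v_i^{G^*}/k$ on $G^*$ and $0$ elsewhere. In the good case $i$ obtains at least $v_i^{G^*}(1-1/k)$; the bad case now unions over up to $k$ possible ``blocking'' members of $G^*$ having matched earlier, and each such match must have cleared at a price comparable to $v_i^{G^*}/k$, which the smoothness bookkeeping absorbs with one additional factor of $k$. Combined with the factor-$k$ reduction in recovered value per player, this yields $1/(2k^2)$.

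\textbf{Part 3.} For inspection costs I would follow the Pandora's-box reduction of \citet{kleinberg2016descending}: replace each $v_{ij}$ by the Weitzman fair cap $\sigma_{ij} := \min(v_{ij}, \kappa_{ij})$, where $\kappa_{ij}$ is defined by $\E[(v_{ij}-\kappa_{ij})^+]=c_{ij}$. Their framework tells us that (i) the expected optimal welfare with inspection is upper-bounded by the expectation of the max-weight matching in $\sigma$-values, and (ii) any agent running a descending-price policy that inspects $j$ the first time the price drops to $\kappa_{ij}$ accrues no loss from inspection in expectation. I would then lift the Part-1 deviation to this setting: $i$ inspects $j^*(i)$ when the price reaches $\kappa_{i,j^*(i)}$ and commits to bid $\sigma_i^*/2$ on $j^*(i)$. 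The same two-case analysis applies with $v_i^*$ replaced by $\sigma_i^*$, yielding $1/8$.

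\textbf{Main obstacle.} The principal technical difficulty is Part 3: formally coupling MM's global descending price with the Weitzman index schedule so that the deviator's inspection expenditures are absorbed into the descending-price accounting without extra loss, and verifying that equilibrium payments in $\sigma$-coordinates upper-bound $\sigma$-welfare. A secondary concern throughout is that MM publicly reveals which pairs have cleared and at which prices, so I must verify the smoothness deviations do not implicitly depend on information unavailable to a pure type-based deviator; this is handled by designing each $\mathrm{dev}_i$ to condition only on $i$'s own value, so that the coupling between the deviating and equilibrium runs is clean.
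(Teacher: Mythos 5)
Your overall plan is correct in spirit: the paper also uses a smoothness argument with a type-dependent deviation, a per-agent inequality of the form $u_i(\mathrm{dev}_i, b_{-i}) + (\text{payment terms}) \geq \lambda v_i^*$, and a sum over the optimal matching $M^*$. But the choice of deviation differs in a way that matters, and there are some concrete gaps.

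\textbf{The deviation is not type-based as claimed.} Your $\mathrm{dev}_i$ bids $v_i^*/2$ only on $j^*(i)$, the partner in the realized optimal matching $M^*$. But $M^*$ is a function of the entire type profile, not of $v_i$ alone, so a unilateral deviator cannot implement this. This directly contradicts your closing remark that the deviation ``conditions only on $i$'s own value.'' The standard fix (Syrgkanis-style ``fake sample'' extension) requires independent types, which the paper does not assume. The paper sidesteps this entirely: the deviation $b_i'$ bids $v_{ij}/2$ on \emph{every} neighbor $j$, which depends only on $i$'s own type, and the key smoothness lemma
\begin{equation*}
u_i(b_{-i},b_i') + \tfrac{1}{4}p_i(b) + \tfrac{1}{4}p_j(b) \geq \tfrac{1}{8}v_{ij}
\end{equation*}
is proved for \emph{every} feasible pair $\{i,j\}$, not just the OPT pair. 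Only in the final summation is the lemma applied to pairs of $M^*$. Decoupling the deviation from $M^*$ is what makes the Bayes-Nash extension immediate.

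\textbf{Your case analysis for Part 1 is incomplete and has an error.} In case (a), $i$ pays its own bid $v_i^*/2$, so the utility is $v_i^*/2$, not $v_i^*/2 - b_{j^*(i),i}(s)$; you are treating the total clearing price as $i$'s payment. More importantly, you omit the case in which $i$ clears with some other neighbor $j''$ before the $(i, j^*(i))$ edge could clear (your deviation bids $0$ elsewhere, so this happens when $b_{j''i}$ is large). It does work out—by indistinguishability, $b_{ij''} \geq 0$ in the original profile means $(i, j'')$ would clear at least as early under $b$, forcing $p_i(b) \geq v_i^*/2$—but you need to state and argue this case. The paper's lemma handles all three cases with the single argument ``if $p_i(b) \geq v_{ij}/2$ or $p_j(b) \geq v_{ij}/2$ we are done; otherwise both are still available at time $v_{ij}/2$.''

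\textbf{Misremembered mechanism.} You write that ``MM publicly reveals which pairs have cleared and at which prices.'' The paper's variant reveals nothing: each agent observes only the moment they themselves are matched. This limited leakage is exactly what makes $b$ and $(b_{-i}, b_i')$ indistinguishable to others until $i$ matches, and it is load-bearing for the coupling argument in every part of the theorem.

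\textbf{Part 3.} Your sketch is the right idea but with the $\sigma,\kappa$ notation transposed relative to the paper (the paper's strike price is $\sigma$, the covered-call value is $\kappa = \min\{\sigma, v\}$). The paper's deviation inspects neighbor $j$ when the price reaches $\sigma_{ij}/2$ (not $\sigma_{ij}$), then updates its bid to $v_{ij}/2$; inspecting at half the strike price is what makes the strategy both exercise in the money and cohere with the factor-$2$ used throughout. And as in Part 1, the paper's covered-call smoothness lemma is stated and proved for arbitrary pairs, not just the OPT pair, so the Bayes-Nash extension is free.
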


\paragraph{Proof ingredients.}
The proofs rely on a smoothness approach (e.g. \cite{roughgarden2017price}) along with several key properties of our variant of the MM.
First, the MM limits information leakage: a participant cannot observe others' strategies until they themselves are matched, at which point it is too late to react.
This controls the otherwise complex strategic interactions of dynamic mechanisms.

Next, the ``rebate'' payment rule of the MM crucially allows participants to align their personal utility with the order of market clearing.
In particular, a participant who deviates to truthful bidding will always receive utility equal to their rebate, which equals $1/4$ of the current descending price.
So the earlier the participant is cleared, the higher their utility.
This is also beneficial to social welfare, where higher-surplus edges should generally be matched first.

Finally, the descending-price structure of the MM is compatible with information acquisition and the ``Pandora's Box'' problem~\citep{weitzman1979optimal}.
Participants are able to manage the risk-reward tradeoff for investing in information acquisition, because once prices have descended to a low point, they know that they can lock in available matches for a bounded cost.
To prove PoA in this setting, we adapt techniques of \cite{kleinberg2016descending} for analyzing welfare in models of inspection.

\paragraph{Possibly-negative values.}
We then consider a general two-sided matching setting where values may be negative.
Understanding this setting is desirable because it more accurately captures job markets, where workers incur a cost (i.e. negative value) for being matched to a job and must be compensated more than that cost.
Here we see for natural reasons that a PoA result for the MM is impossible:
if all participants set their bids so as to refuse all matches, no single participant can deviate to cause any change and the equilibrium obtains zero welfare.
In general, this suggests that a kind of stability condition may be natural and necessary for high-welfare matching mechanisms of any kind.
We observe that \emph{approximate ex post} stability implies approximately optimal welfare, and show that the MM achieves approximate ex post stability if participants bid truthfully.

\begin{proposition*}
  In any $k$-approximate ex post stable strategy profile, the $1/4$-rebate Marshallian Match (in fact, \emph{any} mechanism) achieves at least $1/k$ of the optimal expected welfare.
  Truthful bidding in the $1/4$-rebate Marshallian Match is $4$-approximately ex post stable.
\end{proposition*}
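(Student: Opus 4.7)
My plan is to handle the two claims of the proposition separately. For the welfare claim, I adopt the natural definition of $k$-approximate ex post stability, namely that at the realized outcome $u_i \ge 0$ for all $i$ and $k(u_i + u_j) \ge v_{ij} + v_{ji}$ for every pair $(i,j)$. Under this definition the proof is a textbook Shapley--Shubik-style inequality. Letting $M$ be the matching the mechanism produces and $M^*$ an optimal one, I apply approximate stability to each edge of $M^*$, sum over $M^*$ (each agent appears in at most one edge), and obtain
\[
  \Welf(M^*) \;\le\; k \sum_{i \in V(M^*)} u_i \;\le\; k \sum_i u_i \;\le\; k\,\Welf(M),
\]
where the last step uses that the mechanism only collects money, so $\sum_i u_i$ (value minus payment) is bounded by total match value. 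Taking expectations yields the $1/k$ welfare bound, and since the argument uses nothing about MM it applies to any mechanism, as claimed.

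For the stability claim I specialize to truthful bidding $b_{ij} = v_{ij}$ and lean on two structural facts about the $(1/4)$-rebate MM. First, the clearing time of pair $(i,j)$ is the instant the descending price reaches $b_{ij}+b_{ji}$, so under truthful play pairs clear in descending order of total surplus $v_{ij}+v_{ji}$. Second, the $(1/4)$-rebate returns $1/4$ of the clearing price to each matched agent (so half the total surplus is refunded), giving
\[
  u_i \;=\; v_{i,M(i)} - b_{i,M(i)} + \tfrac{1}{4}\bigl(b_{i,M(i)}+b_{M(i),i}\bigr) \;=\; \tfrac{1}{4}\bigl(v_{i,M(i)} + v_{M(i),i}\bigr),
\]
which is nonnegative provided the descending price halts at $0$ so that no strictly-negative-surplus pair ever clears; unmatched agents simply have $u_i = 0$.

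With those two facts in hand, $4$-approximate stability follows almost immediately. Fix any pair $(i,j)\notin M$ and suppose WLOG that $i$ was matched first, to $j'=M(i)$, at clearing price $v_{i,j'}+v_{j',i}$. If this value were strictly less than $v_{ij}+v_{ji}$ then $(i,j)$ would itself have cleared before $i$ exited the market; hence $v_{i,j'}+v_{j',i}\ge v_{ij}+v_{ji}$. Consequently $u_i = (v_{i,j'}+v_{j',i})/4 \ge (v_{ij}+v_{ji})/4$ while $u_j\ge 0$, giving $4(u_i+u_j)\ge v_{ij}+v_{ji}$ as desired.

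The step I expect to be the main obstacle is pinning down the precise definition of ``approximate ex post stability'' so that it simultaneously (a) is weak enough that the welfare argument goes through for any mechanism with the stated constant, and (b) is exactly what truthful play in the rebate MM achieves with factor $4$ rather than some worse constant. A smaller but necessary loose end is the convention that the descending price terminates at $0$ (or equivalently that rational agents do not participate in strictly-negative-surplus matches), which is what guarantees $u_i\ge 0$ in part~2 even when some valuations are negative.
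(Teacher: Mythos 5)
Your proof is correct and follows essentially the same route as the paper's: the welfare claim is proved by summing the stability inequality over the edges of the optimal matching (Observation~\ref{obs:ex-post-approx}), and the stability claim is proved by observing that truthful bidding in the $(1/4)$-rebate MM yields greedy matching with each matched agent receiving $s_{ij}/4$, so whichever of $\{i,j\}$ matched first did so at a weakly higher surplus (Proposition~\ref{prop:truthful-ex-post}). One minor point in your favor: you explicitly require $u_i \ge 0$ and flag that the descending price stops at $0$, which the paper leaves implicit in both the definition and the proof of Observation~\ref{obs:ex-post-approx}, where the equality $\sum_{\{i,j\}\in M}(u_i+u_j) = \sum_{\{i,j\}\in M^*}(u_i+u_j)$ really ought to be an inequality justified by nonnegativity of unmatched agents' utilities.
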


However, truthfulness is not generally an equilibrium.\footnote{We note that even for deferred acceptance, which is also stable \emph{if all participants are truthful}, in general only one side of the market optimizes their outcomes by being truthful.}
This raises the question of whether it is reasonable to assume participants will adopt approximately stable strategy profiles.
We argue that ex post stability is too strong an assumption, and instead propose \emph{ex ante} stability.
We show that in a Nash setting (not Bayes-Nash) with fixed valuations, if strategies are approximately \emph{ex ante stable}, then the welfare of MM is approximately optimal.

\begin{theorem*}
  In the general Nash setting with fixed valuations, in any strategy profile that is $k$-approximate ex ante stable, the $1/4$-rebate Marshallian Match achieves at least a $\tfrac{1}{4k}$ fraction of the optimal welfare.
\end{theorem*}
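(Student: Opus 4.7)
My plan follows a smoothness template. For each pair $(i^{*}, j^{*})$ in an optimal matching $M^{*}$, I will construct a joint deviation from the given strategy profile that guarantees the pair a joint utility proportional to their surplus $v_{i^{*} j^{*}}$. Ex ante stability then forces the pair's realized joint utility to be within a factor $k$ of this, and summing over $M^{*}$ yields the welfare bound. Throughout I take $k$-approximate ex ante stability to mean: for every pair $(i,j)$, the equilibrium joint utility satisfies $u_i + u_j \geq (1/k)\,(u_i' + u_j')$ for every joint strategy deviation by $(i,j)$ producing utilities $u_i', u_j'$ with all other agents' strategies held fixed.

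The key step is the deviation. Fix $(i^{*}, j^{*}) \in M^{*}$. Consider the joint strategy in which $i^{*}$ and $j^{*}$ each bid $v_{i^{*} j^{*}}/2$ on the other and each bid a large negative value $-L$ on every other potential partner. Under the descending-price dynamics of MM, the trigger price of the pair $(i^{*}, j^{*})$ is exactly $v_{i^{*} j^{*}}$, while the trigger price of any pair $(i^{*}, k)$ or $(j^{*}, k)$ with $k \notin \{i^{*}, j^{*}\}$ is at most the fixed bid the third party places on $i^{*}$ or $j^{*}$, minus $L$. Taking $L$ large enough (relative to the held bids of other agents) forces every such alternative pair to trigger only below $v_{i^{*} j^{*}}$, so pairs not involving $\{i^{*},j^{*}\}$ may match at their usual times but never pull $i^{*}$ or $j^{*}$ out of the market. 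Therefore $(i^{*}, j^{*})$ is guaranteed to match to each other exactly at price $p = v_{i^{*} j^{*}}$. By the $(1/4)$-rebate payment rule their joint payment is $\tfrac{3}{4} p$, so their joint deviation utility is $v_{i^{*} j^{*}} - \tfrac{3}{4} v_{i^{*} j^{*}} = v_{i^{*} j^{*}}/4$.

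The remainder is accounting. By $k$-approximate ex ante stability applied to each $(i^{*}, j^{*}) \in M^{*}$, we have $u_{i^{*}} + u_{j^{*}} \geq v_{i^{*} j^{*}}/(4k)$. Summing over $M^{*}$ and using that each agent appears in at most one pair of $M^{*}$ gives $\sum_{i} u_{i} \geq \Opt/(4k)$. Realized welfare equals total agent utility plus total mechanism revenue, so at nonnegative match prices each pair contributes nonnegative revenue and $\Welf \geq \sum_{i} u_{i} \geq \Opt/(4k)$, as desired.

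The main obstacle I expect lies in the possibly-negative regime. I need to verify (a) that arbitrarily large negative bids are legal in the formal bid space of MM, so the steal-proofing step actually forces the match, and (b) that realized match prices on any ex ante stable profile are nonnegative, so the last step $\Welf \geq \sum_{i} u_{i}$ holds. If negative match prices cannot be ruled out, I would replace the revenue-nonnegativity step by a direct decomposition, writing each matched agent's utility as value minus bid plus rebate share and bounding $\Welf$ in terms of the values of the realized matches without appealing to revenue positivity; the factor of $1/4$ should survive this rewrite, but verifying it is where the real work would go.
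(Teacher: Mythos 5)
Your high-level plan is genuinely different from the paper's and the core idea is viable, but the specific deviation and the accounting are both wrong as written. The paper's deviation has $i^{*}$ and $j^{*}$ each report \emph{truthfully on every edge} (bid $b'_{i\ell}=v_{i\ell}$, ask $a'_{j\ell}=c_{j\ell}$), and the key lemma handles the awkward case where a deviator is pulled into a \emph{different} match before the price reaches $s_{i^*j^*}$: because the rebate returns a quarter of the bid-ask spread and the deviator is truthful, any earlier match is at a higher price and therefore yields at least $s_{i^*j^*}/4$ utility to that deviator alone. You instead force the pair to match each other by refusing all alternatives with bids of $-L$; this sidesteps the ``matched someone else'' case entirely, and if executed with truthful bids it actually gives joint deviation utility $s_{i^*j^*}/2$ (each gets a rebate of $s/4$ and pays net zero beyond their value/cost), which is a \emph{stronger} bound of $1/(2k)$. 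So your route can work and in fact improves the constant.

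However, three things you wrote are incorrect or unsupported. First, you have both $i^{*}$ and $j^{*}$ bid $v_{i^*j^*}/2$, which is the deviation from the \emph{positive-values, no-rebate} section and does not fit here: the quantity that must equal the trigger price is the bid-ask spread, not $v_{i^*j^*}$, and with a $v/2$ bid and $-v/2$ ask the asker's utility is $-c_{j^*i^*}-v_{i^*j^*}/4$, which can be arbitrarily negative; the correct deviation is to bid the value and ask the cost so the trigger is at $p=s_{i^*j^*}$. Second, ``joint payment is $\tfrac34 p$'' is wrong: each side pays its bid and receives a rebate of $p/4$, so the joint net payment is $p/2$, not $3p/4$; combined with the first error this is why you land on $v/4$ instead of the correct $s/2$. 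Third, the $-L$ forcing step is not free: the trigger on an edge $(i^*,k)$ is $-L - a_{ki^*}$, which is guaranteed never to fire only if $L$ exceeds the magnitude of the most negative ask any $k$ might place on $i^{*}$, and the model places no a priori bound on asks. You would need to argue such a bound (e.g.\ from rationality of the fixed profile), whereas the paper's truthful-everywhere deviation needs no such bound because it never tries to refuse anything. You also need, as both you and the paper tacitly do, that each agent's utility in the given profile is nonnegative (individual rationality) so that dropping the non-$M^{*}$ agents from $\sum_i u_i$ only loses nonnegative terms.
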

In fact, the stability property also ensures that participants keep a large fraction of the welfare; the proof uses that participant surplus alone (i.e. welfare minus payments) is at least $\tfrac{1}{4k}$ of the optimal welfare.

Unfortunately, this stable-price-of-anarchy result is fragile and the proof does not extend to the \emph{Bayes-Nash} setting where private valuations are drawn from a common-knowledge prior.
This is roughly due to the difficulty of coordinating and communicating deviations between ``blocking pairs''.
Therefore, the main problem of proving a welfare guarantee in a general negative-bids setting and under a reasonable stability assumption remains open.

\begin{openprob*}
  Give a well-justified stability assumption and a mechanism such that, in the \emph{Bayes-Nash setting} with general values, stable strategy profiles guarantee a constant factor of the optimal expected welfare.
\end{openprob*}

\subsection{Related work} \label{sec:related-work}
We have not found any mechanisms in the literature involving a two-sided matching market with money and quantifiable welfare results.\footnote{One can always apply a general Vickrey-Clarke-Groves (VCG) mechanism, which has an equilibrium with optimal social welfare. But VCG is undesirable because it appears incompatible with both price of anarchy results and models with inspection costs, cf. \citet{kleinberg2016descending}.}
However, the literature on matching with strategic agents is very broad, including with inspection stages, and we highlight a number of papers that are related to our problem.

Probably closest to our work is \citet{immorlica2021designing}, which considers design of a platform to coordinate two-sided matchings with inspection costs and quantifiable welfare guarantees.
Motivated by e.g. matching platforms for romantic dating, the paper studies agents coming from specific populations with a known distribution of types.
The mechanism uses its knowledge of the type distributions to compute strategies for directing inspection stages (e.g. first dates).
The computational problem is challenging and intricate.
\citet{immorlica2021designing} is able to show the structure of equilibria and use this to give welfare guarantees, all in a setting without money.
In contrast, we are interested in monetary mechanisms, motivated (eventually) by e.g. labor markets.
We consider a very simple descending-price mechanism that has no access to knowledge about the agent types or distributions.

Beyond \citet{immorlica2021designing} there is an extensive literature on matching marketplaces and dynamics.
Work in that literature involving money (transferable utility) historically often takes a Walrasian equilibrium perspective, while ours is in the tradition of auction design.
We refer to the survey of \citet{chade2017sorting} for more on this literature.

As mentioned above, a number of recent works study matching markets with information acquisition, but generally consider variants of deferred acceptance without money.
Works of this kind include \citet{he2020application,che2019efficiency,ashlagi2020clearing,chen2021information,fernandez2021centralized,immorlica2020information,hakimov2021costly}.

Among these, \citet{immorlica2020information} uses a lens of optimal search theory similar to ours.
It studies matching of students to schools.
Its matching problem is almost one-sided, in the sense that schools have known and fixed preferences.
The focus is on coordinating efficient acquisition of information by students. 
Unlike our social-welfare perspective, that work focuses on the more standard criterion of stability and introduces regret-free stable outcomes.
In particular, it does not involve money.
Similarly, \citet{hakimov2021costly} study a serial-dictator mechanism for coordinating student inspection without money.
We refer to \citet{immorlica2020information} for an extensive discussion of further work related to information acquisition in matching markets.

Our notions of stability are naturally closely related to others in the literature.
Ex post stability is only a quantitative version of stability in matching; a more sophisticated version of it is used by \citet{immorlica2020information}, for example.
\cite{fernandez2021centralized} also utilizes a similar notion of stability, in a setting of incomplete information.


\section{Preliminaries} \label{sec:prelim}
We now define the model and the variant of the Marshallian Match mechanism studied in this paper, originally described by \citet{waggoner2019matching}.
We define a general model, in which the graph is possibly non-bipartite and bids and values are possibly negative.
Later sections will consider specific restrictions.

There is a finite set of $n$ agents, forming vertices of an undirected graph $G = (\{1,\dots,n\}, E)$.
For now, we do not assume that $G$ is bipartite.
The presence of an edge $\{i,j\} \in E$ represents that it is feasible to match agents $i$ and $j$.
In this case, agent $i$ has a \emph{value} $v_{ij} \in \reals$ for being matched to $j$, and symmetrically, $j$ has a value $v_{ji}$ for being matched to $i$.
If $i$ and $j$ are neighbors, we let $s_{ij} = v_{ij} + v_{ji}$ denote the \emph{surplus} of the edge $\{i,j\}$.

An agent $i$'s \emph{type} consists of their values $v_{ij}$ for each feasible partner $j$.
In the \emph{Nash} setting, each agent has a fixed type, and types are common knowledge.
In the better-motivated \emph{Bayes-Nash} setting, types are drawn from a common-knowledge joint distribution $\D$ and each agent observes their own type.

\subsection{The Marshallian Match}
In the MM, a global price $p(t)$ begins at $+\infty$, i.e. $p(0)=\infty$, and descends continuously in time until it reaches zero at time $1$, i.e. $p(1) = 0$.\footnote{This can be accomplished in theory by letting the price be e.g. $p(t) = \frac{1}{2t}$ for $t \in [0,1/2]$ and $p(t) = 2-2t$ for $t \in [1/2,1]$.}
Each agent $i$ maintains, for all neighbors $j$, a bid $b_{ij}(t)$ at each time $t$.
The mechanism can observe all bids at all times, but agents cannot observe any bids except their own.
For convenience, we may drop the dependence on $t$ from the bid notation.
When the sum of bids on any edge exceeds the global price, i.e. $b_{ij} + b_{ji} \geq p(t)$, then $i$ and $j$ are immediately matched to each other.
Each agent pays their respective bid to the mechanism.

The mechanism keeps half of this total payment and returns one-fourth to each player.
Therefore, we call this variant the $1/4$-rebate Marshallian Match.
We discuss other variants in Section \ref{sec:conclusion}.

\paragraph{Intuition for the mechanism.}
Why might this mechanism be good, and how might participants strategize?
We briefly describe some intuition, referring the reader to \citet{waggoner2019matching} for more detailed discussion.
Social welfare and price of anarchy will be formally defined below.

First, the edges of the graph are in competition with each other to match first.
When an edge $\{i,j\}$ is matched, it produces a surplus of $s_{ij} = v_{ij} + v_{ji}$.
The first-best solution, i.e. the optimal solution for a central planner who holds all information, is to select a maximum matching where $s_{ij}$ are the edge weights.
However, as discussed in \cite{kleinberg2016descending}, algorithms for maximum matching are complex and appear to interact poorly with inspection stages.
More robust is an approximate first-best solution: the \emph{greedy} matching, where the highest-weight edge is matched first, and so on down.
This procedure obtains at least half of the optimal welfare, and can be simulated by a Marshallian Match in which all participants bid truthfully.
In \cite{kleinberg2016descending}, this fact was used to obtain a constant price of anarchy for matching people to items, including in the presence of inspection costs.

However, two-sided matching introduces new strategic considerations.
``Within'' an edge $\{i,j\}$, there is a competition or bargaining for how to split the surplus generated.
An agent $i$ with many edges (many outside options) may be able to underbid significantly, while her counterpart $j$ with very few outside options must offer a very high bid along that edge.
The question is whether this strategizing and competition between $i$ and $j$ destroys the cooperation incentives for the overall bid on the edge $\{i,j\}$.

\subsection{Notation, strategies, and welfare}
A strategy $b_i$ for player $i$ consists of a plan\footnote{We observe that usual nuances around equilibrium in dynamic games, such as non-credible threats, refinements such as subgame perfect equilibrium, etc., do not arise here. In our variant of the MM, each agent $i$ observes nothing until they are matched, after which they can no longer affect the game. So without loss of generality, $i$ commits in advance to a plan $\{ b_{ij}(t) \}$ and follows it until matched.} $b_{ij}(t)$ for how to set bids over time, for each neighbor $j$.

A full strategy profile is denoted $b = (b_1,\dots,b_n)$.
We let $v_i(b)$ be $i$'s value for their match when $b$ is played (zero if none), a random variable depending on the randomization in the strategies and, in the Bayes-Nash setting, on the random draw of the types.
Next, $\pi_i(b)$ denotes $i$'s net payment to the mechanism, i.e. bid minus rebate.
Finally, $p_i(b)$ denotes the \emph{total} net payment on the edge that $i$ is matched along, i.e. $p_i(b) = \pi_i(b) + \pi_j(b)$ when $i$ is matched to $j$.

We assume all agents are Bayesian, rational, and have preferences quasilinear in payment.
That is, given a mechanism and a particular strategy profile $b$, the \emph{utility} of a participant $i$ is the random variable
  \[ u_i(b) = v_i(b) - \pi_i(b) . \]
For example, if under profile $b$, $i$ matches to $j$ at time $t$, then $\pi_i(b) = b_{ij}(t) - \frac{p(t)}{4}$ and $u_i(b) = v_{ij} - b_{ij}(t) + \frac{p(t)}{4}$.

A \emph{Nash equilibrium} of a mechanism with given types $\{v_{ij}\}$ is a strategy profile $b$, consisting of a plan for how to set bids at each moment in time, where each participant maximizes their expected utility, i.e.
  \[ \E u_i(b) \geq \E u_i(b_{-i}, b_i') \]
for all $i$ and for any other strategy $b_i'$ of $i$, where the randomness is taken over the strategies.
A \emph{Bayes-Nash equilibrium} is defined in exactly the same way, but in a setting consisting of a joint distribution over types.
In that case, the randomness is taken over both types and strategies (which are maps from an agent's type to a plan for bidding over time).

We use $\Welf(b)$ to denote the expected social welfare, i.e. sum of utilities and payments:
  \[ \Welf(b) = \E \left[ \sum_i v_i(b) + \sum_j v_j(b) \right] , \]
where the expectation is over all randomness.
In the setting with inspection costs, utility and social welfare also accounts for the loss of utility from the inspection processes; this will be formalized at the relevant point, Section \ref{sec:inspection}.

$\Welf(\Opt)$ refers to the optimal social welfare.
In the Nash setting,
	\[ \Welf(\Opt) = \max_M \sum_{\{i,j\} \in M} v_{ij} + v_{ji} , \]
where the maximum is over all matchings $M$.
In the Bayes-Nash setting,
	\[ \Welf(\Opt) = \E \left[ \max_M \sum_{\{i,j\} \in M} v_{ij} + v_{ji} \right], \]
where the expectation is over the realizations of types.

The \emph{price of anarchy} measures the worst-case ratio of $\Welf(b)$ to $\Welf(\Opt)$ in any equilibrium $b$.
For Nash equilibrium, we have
  \[ \text{PoA} = \min \frac{\Welf(b)}{\Welf(\Opt)} ,\]
where the minimum is taken over all settings (i.e. all types of the participants) and all Nash equilibria $b$.
The Bayes-Nash price of anarchy is defined in exactly the same way, but the minimum is now over all Bayes-Nash settings (i.e. joint distributions on types) and all Bayes-Nash equilibrium strategy profiles $b$.
Note that a Nash equilibrium is a special case of Bayes-Nash where the type distributions are degenerate.
Therefore, a Bayes-Nash price of anarchy result immediately implies a Nash price of anarchy.



%

\section{Results for Positive Valuations}
In this section, we consider a restriction of the general setting where all values $v_{ij}$ are nonnegative.
First, we show that the $1/4$-rebate Marshallian Match achieves a constant approximation of optimal welfare for matching.
Second, we extend the result to the group formation (i.e. matchings on hypergraphs) setting.
Finally, we extend the result in a different direction to the case where participants do not initially know their valuations and can choose to expend effort to discover them.

\subsection{The vanilla positive valuations model}
Here, we take the general model of Section \ref{sec:prelim} and assume that each valuation satisfies $v_{ij} \geq 0$.
We modify the Marshallian Match to require all bids $b_{ij}$ to be nonnegative at all times.

\paragraph{Intuition.}
The nonnegative MM is similar to running multiple interlocking descending-price unit-demand auctions simultaneously.\footnote{Although general price of anarchy results are available for these kinds of auctions for goods, e.g. \citet{lucier2015greedy,feldman2016price}, we do not know of any that apply to two-sided matching.}
This parallel is most obvious in the bipartite case: any bidder $i$ competes against other bidders in the same set for her favorite matches.
Extending this perspective, each bidder $i$ could hypothetically bid as though her neighbors $j$ were simply items with no preferences.
By analyzing the failure of this hypothetical strategy, we find in the MM that it is connected with a different high-welfare event, namely $j$ matching early.
This intuition underlies our smoothness lemma, discussed next.

\paragraph{Smoothness for two-sided matching.}
We give a smoothness lemma that powers our price of anarchy result.
Recall (e.g. \citet{roughgarden2017price}) that smoothness proofs of PoA proceed by guaranteeing high-welfare events in a counterfactual world where $i$ deviates to a less-preferred strategy.
One challenge is that in a dynamic mechanism that proceeds over time, a deviation could cause chain reactions that make it impossible to reason about the outcomes.
Here, we rely on our variant of MM that does not leak any information about bids or strategies.
The only piece of information the agent receives from the mechanism comes at the moment they are matched, after which they cannot react.

A second key challenge is that in a matching market, $i$ may not match $j$ \emph{and} the prices that $i$ and $j$ each pay may still be low, obstructing an adaptation of a standard smoothness proof.
In the MM there is, however, a high \emph{total} price paid for an edge that obstructs the match.
Recall that while $\pi_i(b)$ is $i$'s payment, $p_i(b)$ is the \emph{total} payment on the edge containing $i$ that is matched (zero if $i$ is unmatched).

\paragraph{The deviation and smoothness lemma.}
Let $b$ be any strategy profile. For any bidder $i$, define the deviation strategy $b_i'$ as $b_{ij}'(t) = v_{ij}$ for all feasible neighbors $j$ and all times $t$.
That is, the deviation strategy is simply truthful bidding.

Recall that for the $1/4$-rebate MM, $i$'s utility in deviation when matching to $j$ at bid $b_{ij}'(t)$ and price $p(t)$ is
\begin{align*}
  u_i(b_i',b_{-i}) &= v_{ij} - b_{ij}'(t) + \frac{p(t)}{4}  \\
         &= \frac{p(t)}{4} .
\end{align*}

\begin{lemma} \label{lemma:nonneg-smooth}
	In the nonnegative values setting, for any feasible pair $\{i,j\}$, any strategy profile $b$, and any realization of types,
	\begin{align}
		u_i(b_i',b_{-i}) + \frac{p_i(b)}{4} + \frac{p_j(b)}{4} &\geq \frac{v_{ij}}{8} .
	\end{align}
\end{lemma}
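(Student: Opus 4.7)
The plan is to case-split on the sizes of $p_i(b)$ and $p_j(b)$ relative to $v_{ij}/2$. Let $\tau^{\star}$ denote the time at which the global price reaches $v_{ij}/2$, so $p(\tau^{\star}) = v_{ij}/2$. In the easy case, $p_i(b) \geq v_{ij}/2$ or $p_j(b) \geq v_{ij}/2$, and the inequality follows immediately from $\frac{p_i(b)}{4} + \frac{p_j(b)}{4} \geq \frac{v_{ij}}{8}$ together with $u_i(b_{-i}, b'_i) \geq 0$. The latter nonnegativity holds because under $b'_i$ agent $i$ always bids $v_{ik}/2 \leq v_{ik}$ on edge $\{i,k\}$, so her payment on any match never exceeds her value.

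The interesting case is when $p_i(b) < v_{ij}/2$ and $p_j(b) < v_{ij}/2$ simultaneously. Here, in the original profile $b$ both $i$ and $j$ remain unmatched at every time $t \leq \tau^{\star}$, and in particular no $i$-incident edge activates in $b$ before $\tau^{\star}$. To analyze the deviation $(b_{-i}, b'_i)$ I would exploit the limited-information property of the MM variant: since agents observe nothing until they match, every bid function $b_{k\ell}(\cdot)$ with $k,\ell \neq i$ is an identical function of time in $b$ and in the deviation, so the two processes agree on the subgraph avoiding $i$ up until the first moment an $i$-incident edge activates in the deviation. Now the edge $\{i,j\}$ under the deviation has bid sum $v_{ij}/2 + b_{ji}(\tau^{\star}) \geq p(\tau^{\star})$, so by $\tau^{\star}$ some $i$-edge must have triggered a match; therefore $i$ is matched by $\tau^{\star}$ in the deviation to some partner $k$ at a price $p(\tau') \geq v_{ij}/2$.

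Finally I would lower-bound $i$'s utility at that match. Since $i$ bids $v_{ik}/2$ and the matching condition gives $v_{ik}/2 + b_{ki}(\tau') = p(\tau')$, the utility equals $v_{ik}/2$ with $v_{ik}/2 + b_{ki}(\tau') \geq v_{ij}/2$. If $k = j$ this already yields utility at least $v_{ij}/4 \geq v_{ij}/8$. If $k \neq j$, the subtle case is that a tiny $v_{ik}$ could be carried by a large co-bid $b_{ki}(\tau')$; the plan is to combine the deviation activation $v_{ik}/2 + b_{ki}(\tau') \geq p(\tau')$ with the original-profile fact that $\{i,k\}$ did \emph{not} activate in $b$ by time $\tau'$ (i.e. $b_{ik}(\tau') + b_{ki}(\tau') < p(\tau')$) to extract a bound on $v_{ik}/2$ relative to $p(\tau')$, and hence to $v_{ij}/2$, that is compatible with the constants in the lemma. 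This last constant-chasing step, together with cleanly handling possible cascades of matches among $i$'s neighbors in the deviation, is the main obstacle: the no-leakage coupling keeps the non-$i$ dynamics unchanged, but one still has to argue that $i$'s utility in the deviation is not absorbed by an unfavorable early match, which is precisely where the factor $1/8$ rather than $1/2$ in the final bound comes from.
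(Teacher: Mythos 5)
Your proof takes essentially the same route as the paper's: case-split on whether $p_i(b)$ or $p_j(b)$ already exceeds $v_{ij}/2$, and in the remaining case use the no-information-leakage property of the MM to argue that $i$ must match in the deviation by the time the clock reaches $v_{ij}/2$. The sticking point you flag at the end is genuine, and your proposed repair does not close it: combining the deviation activation $v_{ik}/2 + b_{ki}(\tau') \geq p(\tau')$ with the fact that $\{i,k\}$ did not clear under $b$ (so $b_{ik}(\tau') + b_{ki}(\tau') < p(\tau')$) only gives $v_{ik}/2 > b_{ik}(\tau') \geq 0$, which puts no useful lower bound on $v_{ik}$. In fact with the \emph{constant} deviation $b_{ij}'(t) \equiv v_{ij}/2$ the stated inequality can fail outright. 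Take $i$ with neighbors $j,k$, $v_{ij}=1$, $v_{ik}=\epsilon$, $v_{ji}=v_{ki}=0$; let all bids in $b$ be zero except $b_{ki}(t) = \tfrac{1}{2} - \tfrac{\epsilon}{4}$ while $p(t) \geq \tfrac{1}{2} + \tfrac{\epsilon}{4}$, dropping to $0$ afterwards. Then under $b$ nothing clears until $p=0$, so $p_i(b)=p_j(b)=0$; but in $(b_{-i},b_i')$ the edge $\{i,k\}$ clears first at $p=\tfrac{1}{2}+\tfrac{\epsilon}{4}$ and $u_i(b_{-i},b_i')=\epsilon/2$, which is below $v_{ij}/8$ for small $\epsilon$. (Also a small arithmetic slip: if $k=j$, the deviation utility is $v_{ij}/2$, not $v_{ij}/4$.)

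The paper's own proof glosses over exactly this case: it relies on the assertion that ``if $i$ matches at any price $p(t)$ in deviation, their utility is $p(t)$,'' which is not true for the constant bid $v_{ij}/2$ -- the utility is $v_{ik}/2$, while $p(t) = v_{ik}/2 + b_{ki}(t)$ can be much larger. The repair is to use a \emph{price-dependent} deviation such as $b_{ij}'(t) = \left(v_{ij} - p(t)\right)^+$. Then if $i$ matches to some $k$ at clock price $p(\tau') \leq v_{ik}$, the utility is exactly $v_{ik} - (v_{ik} - p(\tau')) = p(\tau') \geq v_{ij}/2$; and if $p(\tau') > v_{ik}$, the deviation bid on $\{i,k\}$ is zero, so the match forces $b_{ki}(\tau') \geq p(\tau')$, but then (since $b_{ik}(\tau') \geq 0$) the edge $\{i,k\}$ would also have cleared under $b$ at price $\geq p(\tau') \geq v_{ij}/2$, contradicting $p_i(b) < v_{ij}/2$. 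With this deviation your coupling argument yields $u_i(b_{-i},b_i') \geq v_{ij}/2$ in the hard case, and the rest goes through. (One last correction: the factor $1/8$ is produced by the \emph{easy} case, where only $p_i(b)/4 \geq v_{ij}/8$ is available; it is not a loss taken to absorb unfavorable early matches.)
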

\begin{proof}
	All three terms on the left-hand side are nonnegative.
	Therefore, if $p_j(b) \geq v_{ij}/2$ or $p_i(b)\geq v_{ij}/2$, the result is immediate.

	Otherwise, under $b$, neither $i$ nor $j$ is matched on an edge with net payment at least $v_{ij}/2$.
	So they are both unmatched by the time the price has dropped to $p(t) = v_{ij}$.
	The behavior of the mechanism and all participants is identical under $b$ and $(b_i',b_{-i})$ until $i$ matches, because the two profiles cannot be distinguished.
	So if $i$ is still unmatched under $(b_i',b_{-i})$ when $p(t) = v_{ij} = b_{ij}'(t)$, then $i$ matches to $j$ at this price.
	We conclude that $i$ matches at some price $p(t) \geq v_{ij}$.
	Therefore, $u_i(b_{-i},b_i') \geq p(t)/4 \geq v_{ij}/4$.
\end{proof}

\begin{theorem} \label{thm:nonneg-welfare}
	In the nonnegative values setting, the $1/4$-rebate Marshallian Match has a Bayes-Nash price of anarchy of at least $1/8$.
\end{theorem}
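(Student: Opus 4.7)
The plan is a standard smoothness-to-PoA translation that leverages Lemma~\ref{lemma:nonneg-smooth}. Fix a Bayes-Nash equilibrium $b$ and a realization of types, and let $M^*$ be a welfare-maximizing matching on those types. I would apply the smoothness lemma twice to each edge $\{i,j\} \in M^*$, once in each orientation of the pair, and add the two inequalities. Since all terms are nonnegative, this yields
\[ u_i(b_{-i}, b_i') + u_j(b_{-j}, b_j') + \tfrac{1}{2}\bigl(p_i(b)+p_j(b)\bigr) \;\geq\; \tfrac{1}{8}\,s_{ij}. \]

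Next I would sum this over $\{i,j\} \in M^*$. The right-hand side becomes $\tfrac{1}{8}\Welf(\Opt)$ for the realization (since $\Welf(\Opt) = \sum_{\{i,j\}\in M^*} s_{ij}$). Because $M^*$ is a matching, each vertex $k$ appears in at most one edge of $M^*$, so the payment contribution is $\tfrac{1}{2}\sum_{k \in V(M^*)} p_k(b) \leq \tfrac{1}{2}\sum_k p_k(b) = \sum_k \pi_k(b)$, using the double-counting identity $\sum_k p_k(b) = 2\sum_k \pi_k(b)$ that follows from $p_k(b) = \pi_k(b) + \pi_{\text{match}(k)}(b)$. For the deviation-utility terms, I would extend the sum over all $i$ (not just those matched in $M^*$); this is safe because $u_i(b_{-i},b_i') \geq 0$ by construction of $b_i'$, as noted in the lemma's proof. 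Now take expectations over types and strategy randomness, and invoke the Bayes-Nash equilibrium condition $\E u_i(b_{-i},b_i') \leq \E u_i(b)$, which is valid because $b_i'$ depends only on $i$'s own type. This produces
\[ \sum_i \E u_i(b) \;+\; \E\sum_i \pi_i(b) \;\geq\; \tfrac{1}{8}\,\Welf(\Opt). \]
Since $u_i = v_i - \pi_i$ by quasilinearity, the left-hand side equals $\Welf(b)$, which gives the claimed price-of-anarchy bound of $1/8$.

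I expect no real obstacle beyond bookkeeping: the nontrivial content is already packaged in Lemma~\ref{lemma:nonneg-smooth}, whose proof uses the MM's lack of information leakage to argue that $b$ and $(b_{-i},b_i')$ are indistinguishable until $i$ matches. The most delicate piece of the aggregation is verifying that the coefficient $1/4$ on payments in the smoothness lemma is tuned precisely so that, after summing the two orientations (yielding $1/2$) and then summing over edges of the matching, the payment sum telescopes to exactly $\sum_k \pi_k$ rather than twice that; this is what makes the arithmetic close. The other subtlety is confirming that the proposed deviation is a legitimate Bayes-Nash strategy, which it is since $b_i'(v_i)$ is a pure function of $i$'s own valuation vector.
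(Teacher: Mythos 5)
Your proof is correct and takes essentially the same approach as the paper's: sum Lemma~\ref{lemma:nonneg-smooth} in both orientations over edges of $M^*$, use nonnegativity of $u_i(b_{-i},b_i')$ and $p_i$ to extend the sums to all agents, apply the Bayes--Nash condition $\E u_i(b_{-i},b_i') \leq \E u_i(b)$, and close with the identity $\sum_k p_k = 2\sum_k \pi_k$. The paper simply runs the same chain of inequalities top-down, starting from $\Welf(b) = \E\sum_i\bigl(u_i(b) + p_i(b)/2\bigr)$ and descending to $\tfrac{1}{8}\Welf(\Opt)$, whereas you build up from the smoothness inequality; the content is identical.
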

\begin{proof}
	Letting $b$ be any Bayes-Nash equilibrium and $M^*$ be the optimal matching (a random variable), we have the following.
	We will use that, in equilibrium, $i$ prefers $b_i$ to $b_i'$; that $M^*$ contains at most every participant, and utilities and payments are nonnegative; and Lemma \ref{lemma:nonneg-smooth}.
	\begin{align*}
		\Welf(b) &= \E \sum_i \left(u_i(b) + \frac{p_i(b)}{2}\right)  \\
		&\geq \E \sum_i \left(u_i(b_{-i},b_i') + \frac{p_i(b)}{2}\right) \\
		&\geq \E\left[\sum_{\{i,j\}\in M^*} \left(u_i(b_{-i},b_i') + u_j(b_{-j},b_j') + \frac{p_{i}(b)}{2} + \frac{p_{j}(b)}{2}\right)\right]\\
		&= \E\left[\sum_{\{i,j\}\in M^*} \left(u_i(b_{-i},b_i') + \frac{p_{i}(b)}{4} + \frac{p_{j}(b)}{4} + u_j(b_{-j},b_j') + \frac{p_{i}(b)}{4} + \frac{p_{j}(b)}{4} \right) \right]\\
		&\geq \E\left[\sum_{\{i,j\}\in M^*} \left( \frac{v_{ij}}{8} + \frac{v_{ji}}{8} \right) \right] = \frac{1}{8}\Welf(\Opt) .
	\end{align*}
\end{proof}

\subsection{Matchings on hypergraphs}
We now extend to the problem of coordinating formation of groups of size up to $k$.
We will be brief because the proof is similar to the matching case above, i.e. the special case where $k=2$ and groups of size one are disallowed.

Instead of a graph, we are given a hypergraph where agents are vertices and a hyperedge $S$ represents a feasible group, i.e. subset of agents of size at most $k$.
The value of agent $i$ for being assigned to group $S$ is $v_{iS} \geq 0$.
A ``matching'' or assignment $M$ consists of a subset of the hyperedges such that no agent is in two different groups $S,S'  \in M$.
The surplus of a group $S$ is $s_S = \sum_{i \in S} v_{iS}$.
The social welfare of an assignment $M$ is $\sum_{S \in M} s_S$.

The $1/4$-rebate Marshallian Match for this setting is modified as follows to a ``$\tfrac{1}{2k}$-rebate MM''.
Participants $i$ maintain bids $b_{iS}$ on all of their feasible groups $S$.
When the descending price matches the total sum of bids on any group, i.e. $p(t) = \sum_{i \in S} b_{iS}$, that group is matched and drops out of the mechanism.
All members pay their bids, and each receives a rebate of $\tfrac{p(t)}{2|S|}$, with the mechanism keeping $\frac{p(t)}{2}$.

Again consider the truthful deviation $b_i'$ where $b_{iS}'(t) = v_{iS}$ for all feasible $S$ and all $t$.
When $i$ matches in group $S$ at price $p(t)$ in deviation,
\begin{align*}
	u_i(b_i',b_{-i}) &= v_{iS} - b_{iS}'(t) + \frac{p(t)}{2|S|}  \\
	&\geq \frac{p(t)}{2k} .
\end{align*}
\begin{lemma} \label{lemma:group-smoothness}
	For any strategy profile $b$, realizations of types, agent $i$, and hyperedge $S$ containing $i$,
	\[ u_i(b_i',b_{-i}) + \frac{1}{k^2}\sum_{j \in S} p_j(b) \geq \frac{v_{iS}}{2k^2} . \]
\end{lemma}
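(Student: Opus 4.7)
The plan is to mirror the smoothness argument of Lemma \ref{lemma:nonneg-smooth}, adapting it to the group-formation setting. Since all three terms on the left-hand side are nonnegative, it suffices to show that at least one of them is large in each scenario.

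First, I case-split on the quantity $\sum_{j \in S} p_j(b)$. If it is at least $v_{iS}/2$, then $\frac{1}{k^2}\sum_{j \in S} p_j(b) \geq \frac{v_{iS}}{2k^2}$ and we are done. Otherwise, every individual $p_j(b) < v_{iS}/2$, which means every member of $S$ is unmatched under $b$ by the time $t^*$ at which $p(t^*) = v_{iS}/2$.

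Next, I invoke the informational property of the Marshallian Match: no agent observes any mechanism output until the moment of its own match. Thus $b$ and $(b_{-i}, b_i')$ are indistinguishable to every agent other than $i$ until $i$ matches. In particular, the other members of $S$ continue to bid on $S$ exactly as under $b$, so at time $t^*$ the sum of bids on $S$ under the deviation is $v_{iS}/2 + \sum_{j \in S \setminus \{i\}} b_{jS}(t^*) \geq v_{iS}/2 = p(t^*)$. Hence $S$ matches by time $t^*$ under the deviation, forcing $i$ to match (possibly to some other group $S^* \ni i$) at some time $\hat{t} \leq t^*$, obtaining utility $v_{iS^*}/2$.

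The main obstacle is concluding that $v_{iS^*}/2 \geq v_{iS}/(2k^2)$; this mirrors the matching case's claim $v_{ik}/2 \geq v_{ij}/2$ but with extra dilution slack because $S^*$ may contain up to $k-1$ other members whose bids can absorb much of $p(\hat{t})$. At $\hat{t}$ the deviation match equation reads $v_{iS^*}/2 + \sum_{j \in S^* \setminus \{i\}} b_{jS^*}(\hat{t}) = p(\hat{t}) \geq v_{iS}/2$, a sum of at most $k$ nonnegative summands. Pigeonhole guarantees that one of them is at least $p(\hat{t})/k \geq v_{iS}/(2k)$. If $i$'s own term $v_{iS^*}/2$ dominates, then $u_i \geq v_{iS}/(2k) \geq v_{iS}/(2k^2)$, finishing the argument with room to spare. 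If instead some other member's bid carries the mass, the delicate bookkeeping is to route that bid back to the $\frac{1}{k^2}\sum_{j \in S} p_j(b)$ term---either because the dominant bidder happens to lie in $S$, or otherwise by an appropriate accounting over the potential groups containing $S$'s members. I expect this last step to be the trickiest part, and it is where the factor $k^2$ rather than $k$ is paid.
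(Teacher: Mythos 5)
Your proposal mirrors the paper's argument exactly, and the obstacle you flag at the end is a genuine gap---one the paper's own proof shares. The paper's proof asserts that $i$ matches ``at the latest \ldots at $p(t) = v_{iS}/2$, yielding $i$ utility at least $v_{iS}/2$.'' That last inference does not follow. Under the deviation $b_i'$, agent $i$ pays only $i$'s own bid $v_{iS^*}/2$ upon matching to group $S^*$, so $i$'s realized utility is $v_{iS^*}/2$, \emph{not} the clearing price $p(\hat t)$; the clearing price also includes the other members' bids, which $i$ does not pay. If $i$ happens to match some low-value group $S^*$ at a high price $p(\hat t) > v_{iS}/2$ because another member of $S^*$ is bidding aggressively, nothing forces $v_{iS^*} \geq v_{iS}$, and $u_i(b_{-i},b_i') = v_{iS^*}/2$ can be tiny. (The same unaddressed issue is present in the proof of Lemma~\ref{lemma:nonneg-smooth}.)

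The pigeonhole step you propose cannot close this: when the bulk of $p(\hat t) = v_{iS^*}/2 + \sum_{j' \in S^* \setminus \{i\}} b_{j'S^*}(\hat t)$ is carried by some $b_{j'S^*}(\hat t)$ with $j' \notin S$, there is no way to route it into $\frac{1}{k^2}\sum_{j \in S} p_j(b)$; indeed $j'$ may not match at all under $b$, so even $p_{j'}(b)=0$. You were right to expect this last step to be the hard one. The honest reading is that your proposal and the paper's proof share the same unfilled hole, and you deserve credit for locating it. One repair is to use a deviation $b_i'$ that bids $v_{iS}/2$ on $S$ and $0$ on every other group---lowering bids elsewhere can never trigger an earlier match, so $i$ necessarily matches $S$ and obtains utility exactly $v_{iS}/2$---but that deviation depends on $S$ (hence on $M^*$), which then requires a different, randomized-deviation argument in the Bayes-Nash proof of Theorem~\ref{thm:group-welfare}.
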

\begin{proof}
	All terms are nonnegative.
	If under $b$ there exists $j \in S$ with $p_j(b) \geq \frac{v_{iS}}{2}$, then we are done.
	Otherwise, in $(b_{-i},b_i')$ all members of $S$ are unmatched at least until $i$ matches, which at the latest occurs at $p(t) = v_{iS}$, yielding $i$ utility at least equal to its rebate of $\tfrac{p(t)}{2|S|} \geq \tfrac{v_{iS}}{2k} \geq \tfrac{v_{iS}}{2k^2}$.
\end{proof}

\begin{theorem} \label{thm:group-welfare}
	In the hyperedge matching (group formation) setting with nonnegative values and group size up to $k$, the Marshallian Match has a Bayes-Nash price of anarchy of at least $\frac{1}{2k^2}$.
\end{theorem}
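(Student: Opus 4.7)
My plan is to mirror the proof of Theorem \ref{thm:nonneg-welfare}, using Lemma \ref{lemma:group-smoothness} in place of Lemma \ref{lemma:nonneg-smooth} and carefully accounting for the extra factors that arise from hyperedges of size up to $k$. The starting point is to write the welfare as $\Welf(b) = \E \sum_i u_i(b) + \E \sum_i \pi_i(b)$, so the task is to lower bound the total deviation-utility and the total individual payments separately, and then to combine them hyperedge-by-hyperedge over $M^*$.

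The first step is a bookkeeping inequality relating individual payments $\pi_i(b)$ to the quantity $p_j(b)$ that actually appears in Lemma \ref{lemma:group-smoothness}. Since $p_j(b)$ is by definition the total payment on $j$'s matched group $S_j$, we have
\[ \sum_j p_j(b) = \sum_{S \in M} \sum_{j \in S} \sum_{i \in S} \pi_i(b) = \sum_{S \in M} |S| \sum_{i \in S} \pi_i(b) \leq k \sum_i \pi_i(b), \]
using $|S| \leq k$. Combining this with the Bayes-Nash inequality $\E u_i(b) \geq \E u_i(b_{-i}, b_i')$ gives
\[ \Welf(b) \geq \E \sum_i u_i(b_{-i}, b_i') + \frac{1}{k} \E \sum_j p_j(b). \]

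Next, I would drop to the optimal assignment $M^*$. Since each vertex lies in at most one $S \in M^*$ and all summands are nonnegative, $\sum_i x_i \geq \sum_{S \in M^*} \sum_{i \in S} x_i$ holds for both the utility and the payment terms. For each $S \in M^*$ individually, summing Lemma \ref{lemma:group-smoothness} over all $i \in S$ yields
\[ \sum_{i \in S} u_i(b_{-i}, b_i') + \frac{|S|}{k^2} \sum_{j \in S} p_j(b) \geq \frac{s_S}{2k^2}, \]
and since $|S|/k^2 \leq 1/k$ the inequality still holds after enlarging the coefficient on the payment term to $1/k$. Summing over $S \in M^*$, taking expectations, and combining with the previous display gives $\Welf(b) \geq \Welf(\Opt)/(2k^2)$, as claimed.

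The only real obstacle is the multiplicity bookkeeping: summing Lemma \ref{lemma:group-smoothness} over the $|S|$ members of a hyperedge forces the $p_j$ terms to be counted with multiplicity $|S|$, and this is exactly the step that forces the constant $1/(2k^2)$ rather than something closer to the $1/8$ of the pairwise case. Nonnegativity of bids and values, together with the fact that $M^*$ is a valid assignment so no vertex is shared across two hyperedges, is what lets the restriction to $M^*$ lose no mass.
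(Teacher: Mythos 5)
Your proof is correct and follows essentially the same route as the paper's: the same payment bookkeeping yielding the $1/k$ factor (since $\sum_i p_i(b) \leq k\cdot\text{revenue}$), the same equilibrium inequality and restriction to $M^*$ via nonnegativity, and the same application of Lemma \ref{lemma:group-smoothness} summed over each hyperedge. The only cosmetic difference is that you enlarge the payment coefficient from $|S|/k^2$ up to $1/k$ after invoking the lemma, whereas the paper shrinks it from $1/k$ down to $|S^*|/k^2$ before invoking it; the chain of inequalities and the resulting bound are identical.
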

\begin{proof}
	Letting $b$ be a Bayes-Nash equilibrium and $M^*$ the optimal assignment (a random variable), we have the following.
	The first line follows because groups have size at most $k$, so the total payment of a group $S$ is at least $\sum_{i \in S} p_i(b)/k$.
	\begin{align*}
		\Welf(b) &\geq \E \sum_i \left( u_i(b) + \frac{p_i(b)}{k} \right)  \\
		&\geq \E \sum_i \left( u_i(b_{-i},b_i') + \frac{p_i(b)}{k} \right)  \\
		&\geq \E \sum_{S^* \in M^*} \sum_{i \in S^*} \left( u_i(b_{-i},b_i') + \frac{p_i(b)}{k} \right)  \\
		&\geq \E \sum_{S^* \in M^*} \sum_{i \in S^*} \left( u_i(b_{-i},b_i') + \sum_{j \in S^*} \frac{p_j(b)}{k^2} \right)  \\
		&\geq \E \sum_{S^* \in M^*} \sum_{i \in S^*} \frac{v_{iS^*}}{2k^2}
		\quad=    \frac{\Welf(\Opt)}{2k^2} .
	\end{align*}
\end{proof}
It remains to be seen if the factor can be improved to $\Omega(1/k)$.
We appear to lose one factor of $k$ because the greedy algorithm (e.g. the MM where all participants are truthful) is only a $1/k$ approximation to optimal, and then another factor from strategic behavior.

\subsection{Inspection} \label{sec:inspection}
We now extend our welfare result for graphs to a model with information acquisition costs.
We again do not require the graph to be bipartite.
The model is augmented as follows, following e.g. \citet{kleinberg2016descending}.
The type of $i$ consists of, for each feasible partner $j$, a cost of inspection $r_{ij} \geq 0$ and a distribution $D_{ij}$ over the nonnegative reals.
Our setting is Bayes-Nash, i.e. all types are drawn jointly from a common-knowledge prior.
When $i$ inspects an edge $\{i,j\}$, they incur a cost of $r_{ij}$ and observe a value $v_{ij} \sim D_{ij}$ independently of all other randomness in the game.
The cost $r_{ij}$ can model a financial investment, or the cost of time or effort required for $i$ to learn their value $v_{ij}$.

Let $I_{ij} \in \{0,1\}$ be the random variable indicator that $i$ inspects $j$ and let $A_{ij} \in \{0,1\}$ be the indicator that $i$ is matched to $j$.
We adopt the standard assumption (although recent algorithmic work of \citet{beyhaghi2019pandora} has weakened it) that $i$ must inspect $j$ prior to being matched to $j$; i.e. if the match occurs, $i$ must inspect and incur cost $r_{ij}$ if they haven't yet.
In other words, $A_{ij} \leq I_{ij}$ pointwise.
We also assume that, in the game, inspection is instantaneous with respect to the movement of the price $p(t)$.

An agent $i$'s utility is their value for their match (if any) minus the sum of all inspection costs and their net payment.
Formally, we have $u_i(b) = \sum_j \left(A_{ij} v_{ij} - I_{ij} r_{ij}\right) - \pi_i(b)$, where the sum is over feasible neighbors.
Social welfare is the sum of all agent utilities and revenue, i.e. $\Welf(b) = \sum_i \left( u_i(b) + \pi_i(b) \right)$.

\paragraph{Covered call values and exercising in the money.}
\citet{kleinberg2016descending} give technical tools, based on a solution of the Pandora's box problem~\citep{weitzman1979optimal}, utilizing finance-inspired definitions.\footnote{We refer the reader to \citet{kleinberg2016descending} for explanation of the terminology, but in brief, the idea is to imagine that when $i$ inspects $j$, the cost $r_{ij}$ is subsidized by an investor in return for a ``call option'', i.e. the right to the excess surplus $v_{ij} - \sigma_{ij}$ beyond a threshold $\sigma_{ij}$, if any. The agent's surplus for that match becomes $\kappa_{ij}$, and the investor breaks even if $(i,j)$ exercises in the money, otherwise loses money.}
\begin{definition} \label{def:strike-covered-call}
	Given a cost $r_{ij}$ and distribution $D_{ij}$, the \emph{strike price} is the unique value $\sigma_{ij}$ satisfying
	\[ \E_{v_{ij}\sim D_{ij}} \left(v_{ij} - \sigma_{ij}\right)^+ = r_{ij} , \]
	where $\left(\cdot\right)^+ = \max\{\cdot ~,~ 0\}$.
	The \emph{covered call value} is the random variable $\kappa_{ij} = \min\{\sigma_{ij}, v_{ij}\}$.
\end{definition}
We assume $\E_{v_{ij} \sim D_{ij}} v_{ij} \geq r_{ij}$ if $\{i,j\}$ is a feasible match.
As we have nonnegative values, this is equivalent to the condition $\sigma_{ij} \geq 0$.

A \emph{matching process} is any procedure that involves sequentially inspecting some of the potential matches and making matches, according to any algorithm or mechanism.
The following property, along with Lemma \ref{lemma:covered-call}, allows us to relate surplus in a matching process to that of a world with zero inspection costs and values $\kappa_{ij}$.
\begin{definition} \label{def:exercise-money}
	In any matching process, we say that the ordered pair $(i,j)$ \emph{exercises in the money} if, for all realizations of the process, if $v_{ij} > \sigma_{ij}$ and $I_{ij} = 1$ then $A_{ij} = 1$.
	We say that agent $i$ exercises in the money if $(i,j)$ exercises in the money for all feasible partners $j$.
\end{definition}

\begin{lemma}[Immediate extension of \cite{kleinberg2016descending}] \label{lemma:covered-call}
	For any feasible partners $\{i,j\}$, any fixed types of all agents, and any matching process,
	\[ \E\left[ A_{ij} v_{ij} - I_{ij} r_{ij} \right] \leq \E \left[ A_{ij} \kappa_{ij} \right], \]
	with equality if and only if $(i,j)$ exercises in the money.
\end{lemma}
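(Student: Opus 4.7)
The plan is to establish a pointwise decomposition of $A_{ij} v_{ij}$ and then exploit a conditional-independence argument to line up $\E[I_{ij} r_{ij}]$ with an integral against $(v_{ij}-\sigma_{ij})^+$. First, I would check by casework that for every realization,
\[ v_{ij} = \kappa_{ij} + (v_{ij}-\sigma_{ij})^+, \]
since if $v_{ij}\le\sigma_{ij}$ then $\kappa_{ij}=v_{ij}$ and the positive part is $0$, while if $v_{ij}>\sigma_{ij}$ then $\kappa_{ij}=\sigma_{ij}$ and the positive part is $v_{ij}-\sigma_{ij}$. Multiplying by $A_{ij}$ and taking expectations gives
\[ \E[A_{ij}v_{ij}] = \E[A_{ij}\kappa_{ij}] + \E\bigl[A_{ij}(v_{ij}-\sigma_{ij})^+\bigr], \]
so it suffices to show $\E[A_{ij}(v_{ij}-\sigma_{ij})^+] \le \E[I_{ij} r_{ij}]$, with equality characterizing the equality case.

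Second, I would use the defining identity $r_{ij} = \E_{v_{ij}\sim D_{ij}}(v_{ij}-\sigma_{ij})^+$ together with the fact that $I_{ij}$ is independent of the realized value $v_{ij}$. This is the key modeling point: the agent must commit to inspect $j$ before observing $v_{ij}$, so the indicator $I_{ij}$ is measurable with respect to history that excludes the fresh draw $v_{ij}\sim D_{ij}$. Consequently,
\[ \E[I_{ij}r_{ij}] = \Pr[I_{ij}=1]\cdot\E_{v_{ij}\sim D_{ij}}(v_{ij}-\sigma_{ij})^+ = \E\bigl[I_{ij}(v_{ij}-\sigma_{ij})^+\bigr]. \]

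Third, I would combine the two displays: the desired inequality reduces to
\[ \E\bigl[(A_{ij}-I_{ij})(v_{ij}-\sigma_{ij})^+\bigr] \le 0. \]
This is immediate, since $A_{ij}\le I_{ij}$ pointwise (no match without inspection) and $(v_{ij}-\sigma_{ij})^+\ge 0$, so the integrand is nonpositive everywhere. Equality holds iff $(I_{ij}-A_{ij})(v_{ij}-\sigma_{ij})^+ = 0$ almost surely, i.e.\ whenever $I_{ij}=1$ and $v_{ij}>\sigma_{ij}$ we must have $A_{ij}=1$ — precisely the exercises-in-the-money condition from Definition \ref{def:exercise-money}.

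The only subtle step is the independence invoked in the second paragraph; the rest is algebraic rearrangement. Since the lemma must hold for \emph{any} matching process, I would be careful to argue independence from the protocol itself (inspection is a prerequisite for observing $v_{ij}$, and $v_{ij}\sim D_{ij}$ is drawn independently of all other randomness), rather than from any specific strategy or equilibrium structure. Given that, the rest of the argument is a clean pointwise bound.
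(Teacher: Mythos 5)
Your proof is correct and follows essentially the same route as the paper's: you use the strike-price identity $r_{ij}=\E(v_{ij}-\sigma_{ij})^+$, the independence of $I_{ij}$ from the fresh draw $v_{ij}$, the pointwise relation $A_{ij}\le I_{ij}$, and the identity $v_{ij}-(v_{ij}-\sigma_{ij})^+=\kappa_{ij}$. The only cosmetic difference is that you start from the decomposition of $v_{ij}$ and reduce to a single nonpositive expectation, while the paper substitutes directly into $\E[A_{ij}v_{ij}-I_{ij}r_{ij}]$ and chains the steps; the ingredients and the equality characterization are identical.
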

\begin{proof}
	Using the definitions, independence of $v_{ij} \sim D_{ij}$ from the variable $I_{ij}$, and the assumption $A_{ij} \leq I_{ij}$,
	\begin{align*}
		\E \left[ A_{ij} v_{ij} - I_{ij} r_{ij} \right]
		&=    \E \left[ A_{ij} v_{ij} - I_{ij} \E_{v'_{ij} \sim D_{ij}} ( v_{ij}' - \sigma_{ij})^+ \right] \\
		&=    \E \left[ A_{ij} v_{ij} - I_{ij} ( v_{ij} - \sigma_{ij})^+ \right] \\
		&\leq \E \left[ A_{ij} v_{ij} - A_{ij} ( v_{ij} - \sigma_{ij})^+ \right] \\
		&=    \E \left[ A_{ij} \min\{\sigma_{ij}, v_{ij}\} \right].
	\end{align*}
	We observe that the inequality is strict if and only if there is positive probability of the event that $I_{ij} = 1$, $A_{ij} = 0$, and $v_{ij} > \sigma_{ij}$ all occur, i.e. $(i,j)$ fails to exercise in the money.
\end{proof}

\paragraph{The deviation strategy.}
Define the deviation strategy $b'_{i}$ for agent $i$ as follows: Initially bid $0$ on each neighbor $j$; when the clock reaches $\sigma_{ij}$, inspect neighbor $j$ and update bid to $\kappa_{ij} = \min\{\sigma_{ij}, v_{ij}\}$.
For a strategy profile $b$, define the random variable $\kappa_i(b)$ to be the covered call value of $i$ for its match in profile $b$, i.e. $\min\{\sigma_{ij}, v_{ij}\}$ when $i$ is matched to $j$.
Let $\bar{u}_i(b)$ denote $i$'s ``covered call utility'', i.e.
 \[ \bar{u}_i(b) = \kappa_i(b) - \pi_i(b). \]

\begin{lemma} \label{lemma:inspect-exercise}
	The deviation strategy $b_i'$ exercises in the money and ensures $\bar{u}_i(b_{-i},b_i') \geq 0$, for any $b_{-i}$.
\end{lemma}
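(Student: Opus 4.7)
The plan is to do a pointwise case analysis of how agent $i$ ends up matched under the deviation profile $(b_i', b_{-i})$. Two structural features of the Marshallian Match make this tractable: the no-information-leakage property, so the trajectory of $b_{-i}(t)$ up to any match involving $i$ is independent of $i$'s type and realized actions; and the piecewise-constant structure of $b_i'$, whose bid on each neighbor $j$ is $0$ while $p(t)>\sigma_{ij}/2$ and then $v_{ij}/2$ once the clock descends to the inspection instant $t_{ij}$ defined by $p(t_{ij})=\sigma_{ij}/2$.

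For the ``exercises in the money'' conjunct, I fix a realization with $I_{ij}=1$ and $v_{ij}>\sigma_{ij}$. By construction of $b_i'$, the event $I_{ij}=1$ can only happen if the clock descends to $\sigma_{ij}/2$ with $i$ still unmatched and $j$ still present. At that instant $i$'s bid on $j$ jumps to $v_{ij}/2 > \sigma_{ij}/2 = p(t_{ij})$, so combined with $b_{ji}\ge 0$ the match condition $b_{ij}+b_{ji}\ge p(t_{ij})$ is met; the MM matches $\{i,j\}$ immediately, giving $A_{ij}=1$.

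For $\bar u_i(b_i',b_{-i})\ge 0$, if $i$ is unmatched the bound is trivial. Otherwise $i$ matches some neighbor $j^*$ at some time $t^*$, and I split on whether $t^*$ precedes, or coincides with/follows, the planned inspection instant $t_{i j^*}$. In the pre-inspection subcase $b_{i j^*}(t^*)=0$, so $\pi_i=0$ and $\bar u_i=\kappa_{i j^*}\ge 0$. In the post-inspection subcase I further split on $v_{i j^*}\le \sigma_{i j^*}$ vs.\ $v_{i j^*}>\sigma_{i j^*}$: in the former $\kappa_{i j^*}=v_{i j^*}$ and $\pi_i\le v_{i j^*}/2$, giving $\bar u_i\ge v_{i j^*}/2\ge 0$; in the latter the match fires at the bid-jump instant with clock level $\sigma_{i j^*}/2$, and I argue via the match rule that the effective payment is bounded by $\kappa_{i j^*}=\sigma_{i j^*}$.

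The main obstacle is this last subcase, where $v_{i j^*}$ may be much larger than $\sigma_{i j^*}$ and a literal bid of $v_{i j^*}/2$ could otherwise overshoot the covered call value. Handling it correctly requires leveraging the fact that the bid jumps discontinuously at exactly the match-triggering instant, so that the MM's payment is controlled by the descending clock rather than the overshoot; the same argument also needs the convention that $b_i'$ only incurs inspection of $j$ when $j$ is still available, which is what links $I_{i j}=1$ to a match actually being achievable. Modulo this careful accounting, the remainder of the proof is bookkeeping, and no-information-leakage is what makes the case split clean by insulating $b_{-i}$ from $i$'s actions.
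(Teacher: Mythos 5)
Your case structure mirrors the paper's, but there are two places where the argument does not close.  First, for ``exercises in the money'' you assert that $I_{ij}=1$ can \emph{only} occur because the clock descended to $\sigma_{ij}/2$ with $i$ still unmatched; this is false.  Under the model's assumption $A_{ij}\le I_{ij}$, inspection of $j$ can also be forced because $j$ bids high and the pair $\{i,j\}$ clears \emph{before} the planned inspection instant (while $b_{ij}'=0$), in which case $i$ must inspect at the moment of matching.  The paper treats this case separately; it is trivial since $A_{ij}=1$ there, but your proof as written simply omits it.

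Second, and more substantively, in the post-inspection subcase with $v_{ij^*}>\sigma_{ij^*}$ you correctly flag the overshoot danger but then assert ``via the match rule that the effective payment is bounded by $\kappa_{ij^*}=\sigma_{ij^*}$.''  That bound does not follow from the mechanism: the MM is pay-your-bid, so if $i$'s post-inspection bid is literally $v_{ij^*}/2$, the payment is $v_{ij^*}/2$, which can exceed $\kappa_{ij^*}$ whenever $v_{ij^*}>2\sigma_{ij^*}$, making $\bar u_i$ negative.  The clock price $p(t)$ does not cap the payment.  The correct resolution (which the paper uses implicitly --- its payment claim $\pi_i(b)=\kappa_{ij}/2$, and the phrase ``bidding $\kappa_{ij'}/2$'' in the proof of Lemma~\ref{lemma:inspect-smooth}, only make sense under this reading) is that the post-inspection bid should be $\kappa_{ij}/2=\min\{\sigma_{ij},v_{ij}\}/2$, not $v_{ij}/2$.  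This cap is harmless for exercising in the money: when $v_{ij}\ge\sigma_{ij}$, the capped bid is $\sigma_{ij}/2=p(t)$, which (together with $b_{ji}\ge 0$) still triggers the match immediately.  And it restores the payment bound: the pay-your-bid payment on any inspected edge is exactly $\kappa_{ij}/2$, so $\bar u_i=\kappa_{ij}-\kappa_{ij}/2=\kappa_{ij}/2\ge 0$.  Without making this cap explicit, your argument in that subcase does not go through.
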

\begin{proof}
	(Exercises in the money.)
	We consider the two possible scenarios where $i$ inspects a neighbor $j$, i.e. when $I_{ij} = 1$.
	If the inspection occurs because the mechanism has just matched $i$ to a previously-uninspected neighbor $j$, then $A_{ij} = 1$ and the requirement of exercising in the money is satisfied.
	Otherwise, the clock $p(t)$ has reached $\sigma_{ij}$, and after inspecting, $i$ updates that bid to $\kappa_{ij}$.
	If $v_{ij} \geq \sigma_{ij}$, then $b_{ij}' \geq p(t)$ and the match occurs immediately, so $A_{ij} = 1$.

	(Nonnegative covered call utility.)
	Recall that the rebate is nonnegative, so $i$'s net payment $\pi_i$ is always at most $i$'s bid.
	If $i$ is matched to some previously-uninspected $j$, then $b_{ij} = 0$, so $\pi_i(b_{-i},b_i') \leq 0$, and $\bar{u}_i(b_{-i},b_i') \geq 0$.
	If $i$ is matched to some $j$ that $i$ has already inspected, then $\pi_i(b_{-i},b_i') \leq b_{ij}' = \kappa_{ij}$, so $\bar{u}_i(b_{-i},b_i') \geq \kappa_{ij} - \kappa_{ij} \geq 0$.
	Finally, if $i$ is unmatched, then $\bar{u}_i(b_{-i},b_i') = 0$.
\end{proof}

\begin{lemma}[Covered call smoothness] \label{lemma:inspect-smooth}
  For any feasible neighbors $\{i,j\}$ and any strategy profile $b$, for all realizations of types and values:
  \[ \bar{u}_i(b_{-i},b_i') + \frac{p_i(b)}{4} + \frac{p_j(b)}{4} \geq \frac{\kappa_{ij}}{8} . \]
\end{lemma}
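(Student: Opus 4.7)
My plan is to mirror the proof of Lemma \ref{lemma:nonneg-smooth} structurally, replacing $v_{ij}$ with the covered call value $\kappa_{ij}$ throughout, and invoking Lemma \ref{lemma:inspect-exercise} wherever the earlier argument used the nonnegativity of $u_i$ for the deviation.

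First, I would observe that all three summands on the left are nonnegative: the payment terms by the nonnegative-bid constraint, and $\bar{u}_i(b_{-i}, b_i')$ by Lemma \ref{lemma:inspect-exercise}. This disposes of the case $\max\{p_i(b), p_j(b)\} \geq \kappa_{ij}/2$, since then the payment terms alone contribute at least $\kappa_{ij}/8$. So I may assume $p_i(b), p_j(b) < \kappa_{ij}/2$, which in particular means both $i$ and $j$ are unmatched under $b$ at every moment with $p(t) > \kappa_{ij}/2$, and hence also at the moment $p(t) = \sigma_{ij}/2 \geq \kappa_{ij}/2$.

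Next, I would invoke the MM's informational opacity --- each agent observes nothing except their own match event --- to conclude that the trajectories of $b$ and $(b_{-i}, b_i')$ are indistinguishable to all participants until $i$'s match state first diverges. Combined with the fact that the deviation $b_i'$ bids $0$ on every neighbor $k$ at every price above $\sigma_{ik}/2$, this yields that $i$ (and therefore $j$) remain unmatched in $(b_{-i}, b_i')$ at the instant $p(t) = \sigma_{ij}/2$, which is precisely when $b_i'$ directs $i$ to inspect $j$ and raise its bid. From there, as in the proof of Lemma \ref{lemma:inspect-exercise}, I would split on whether $v_{ij} \geq \sigma_{ij}$, in which case edge $\{i,j\}$ triggers immediately at $p(t) = \sigma_{ij}/2$, or $v_{ij} < \sigma_{ij}$, in which case the edge is guaranteed to trigger by the time $p(t) = v_{ij}/2$. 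In both subcases, the payment bookkeeping in Lemma \ref{lemma:inspect-exercise}'s proof delivers $\bar{u}_i(b_{-i}, b_i') \geq \kappa_{ij}/2$, which is a fortiori $\geq \kappa_{ij}/8$.

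The main obstacle will be making the indistinguishability step airtight in the inspection model: different neighbors $k$ of $i$ carry different inspection-trigger prices $\sigma_{ik}/2$, so one must verify that any edge incident to $i$ that triggers in $(b_{-i}, b_i')$ at a price above $\sigma_{ij}/2$ would already have triggered under $b$. A related subtlety is establishing the $\bar{u}_i \geq \kappa_{ij}/2$ lower bound when $i$'s deviation match is to some neighbor $k \neq j$; here I would appeal to Lemma \ref{lemma:inspect-exercise} for a uniform lower bound on the deviation's covered-call utility, using that any such earlier match must happen at a clock price exceeding $\sigma_{ij}/2$ and hence $\kappa_{ij}/2$.
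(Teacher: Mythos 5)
Your high-level structure matches the paper's: dispose of the easy case $\max\{p_i(b),p_j(b)\}\geq\kappa_{ij}/2$, invoke the indistinguishability of $b$ and $(b_{-i},b_i')$, and then analyze what $i$ gets in deviation. But the middle step has a genuine gap. You assert that under $(b_{-i},b_i')$, agent $i$ remains unmatched at $p(t)=\sigma_{ij}/2$, deducing this from the indistinguishability of trajectories together with the fact that $b_i'$ bids $0$ on each neighbor $k$ at prices above $\sigma_{ik}/2$. That deduction does not go through. For neighbors $k$ with $\sigma_{ik}>\sigma_{ij}$, the deviation has already inspected $k$ and is bidding a \emph{positive} amount at prices between $\sigma_{ij}/2$ and $\sigma_{ik}/2$. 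This deviation bid can exceed whatever $i$ was bidding under $b$, so the edge $\{i,k\}$ can trigger in deviation even though it never triggered under $b$, and hence $i$ can be matched to $k\neq j$ before the price reaches $\sigma_{ij}/2$. You flag this as a ``subtlety'' at the end, but the patch you propose --- appealing to Lemma \ref{lemma:inspect-exercise} for a ``uniform lower bound'' --- does not close it: that lemma gives only $\bar{u}_i\geq 0$, not the $\bar{u}_i\geq\kappa_{ij}/2$ you need.

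The paper's proof avoids the false ``still-unmatched'' claim by splitting at a different threshold: either $i$ is still unmatched in deviation when the price reaches $\kappa_{ij}/2$ (in which case the edge $\{i,j\}$ triggers and $\bar{u}_i=\kappa_{ij}/2$ by direct accounting), or $i$ has matched to some $j'$ at a price strictly above $\kappa_{ij}/2$. In the latter case the crucial observation is that the match must be to a neighbor $i$ has \emph{already inspected}: prior to inspection the deviation bids $0$, which is weakly below $i$'s bid under $b$, and since $i$ was unmatched under $b$ at every price above $\kappa_{ij}/2$, those zero bids cannot trigger an edge before that threshold. From ``$j'$ is inspected'' the paper then reads off the payment and covered-call value directly to get $\bar{u}_i=\kappa_{ij'}/2\geq\kappa_{ij}/2$. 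This case analysis --- and in particular the ``match must be to an inspected neighbor'' argument --- is the missing ingredient in your outline; without it the proof does not handle the scenario where a higher-$\sigma$ neighbor intercepts $i$ early.
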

\begin{proof}
  Fix a realization of all types and values.
  The quantities $p_i(b)$, $p_j(b)$, and $\bar{u}_i(b_{-i},b_i')$ are all nonnegative.
  If $p_i(b) \geq \kappa_{ij}/2$ or $p_j(b) \geq \kappa_{ij}/2$, then we are already done.
  So suppose neither holds.
  Then in profile $b$, both $i$ and $j$ are not yet matched at price $p(t) = \kappa_{ij}$.
  Observe that in $(b_{-i},b_i')$, all other agents' bids and behavior are unchanged until $i$ is matched, because they continue playing their strategies in $b$ and no information available to them indicates the change in $i$'s strategy.

  Therefore, in profile $(b_{-i},b_i')$, $i$ matches at some $p(t) \geq \kappa_{ij}$, because if the price reaches $\kappa_{ij}$ without $i$ being matched, then $i$ will match to $j$.
  Let $j'$ be the partner $i$ is matched to.
  Then by construction, $i$ is bidding $b_{ij'}'(t) \leq \kappa_{ij'}$, so $\bar{u}_i(b_{-i},b_i') \geq \kappa_{ij'} - \kappa_{ij'} + \tfrac{p(t)}{4} \geq \tfrac{p(t)}{4} \geq \tfrac{\kappa_{ij}}{4}$.
\end{proof}

\begin{theorem} \label{theorem:inspect-welfare}
  In the inspection setting with nonnegative values, the $1/4$-rebate Marshallian Match guarantees a Bayes-Nash price of anarchy of at least $1/8$.
\end{theorem}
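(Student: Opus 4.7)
The plan is to mimic the smoothness proof of Theorem \ref{thm:nonneg-welfare} almost verbatim, with one crucial substitution: wherever the positive-values proof used true utility $u_i$ and true edge value $v_{ij}$, I will use the covered-call utility $\bar u_i$ and covered-call value $\kappa_{ij}$. The machinery of Lemmas \ref{lemma:covered-call}, \ref{lemma:inspect-exercise}, and \ref{lemma:inspect-smooth} is exactly what makes this substitution legal in three specific places, so the proof is really just plugging these lemmas into the template we already have.

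First I would fix a Bayes-Nash equilibrium $b$ and choose $M^*$ to be the matching that, for each realization of types and values, maximizes $\sum_{\{i,j\}\in M}(\kappa_{ij}+\kappa_{ji})$. Applying Lemma \ref{lemma:covered-call} edge-by-edge to the welfare-optimal matching process, we get
\[ \Welf(\Opt) \;\leq\; \E\Bigl[\sum_{\{i,j\}\in M^*}(\kappa_{ij}+\kappa_{ji})\Bigr], \]
so it suffices to show that $\Welf(b)$ is at least one-eighth of this covered-call upper bound.

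Next I carry out the same four-line chain as in Theorem \ref{thm:nonneg-welfare}. Starting from $\Welf(b)=\E\sum_i(u_i(b)+\tfrac12 p_i(b))$, the equilibrium condition gives $\E u_i(b)\ge \E u_i(b_{-i},b_i')$, and then Lemmas \ref{lemma:covered-call} and \ref{lemma:inspect-exercise} together give $\E u_i(b_{-i},b_i')=\E\bar u_i(b_{-i},b_i')$ because $b_i'$ exercises in the money (so the Lemma \ref{lemma:covered-call} inequality is tight for this profile). Restricting the outer sum from all agents to endpoints of $M^*$ (legal since each $\bar u_i(b_{-i},b_i')$ and $p_i(b)$ is nonnegative by Lemma \ref{lemma:inspect-exercise} and the nonnegative bids in MM), then splitting each edge's payment terms symmetrically and invoking the covered-call smoothness Lemma \ref{lemma:inspect-smooth} to each ordered endpoint, we conclude
\[ \Welf(b)\;\geq\; \E\sum_{\{i,j\}\in M^*}\Bigl(\tfrac{\kappa_{ij}}{8}+\tfrac{\kappa_{ji}}{8}\Bigr)\;\geq\;\tfrac18\Welf(\Opt). \]

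I do not expect any real obstacle here: the three inspection-specific lemmas were precisely designed to let us run the positive-values smoothness argument in the covered-call world. The one place to be careful is the switch from $u_i$ to $\bar u_i$ under the deviation $(b_{-i},b_i')$, which works only because Lemma \ref{lemma:inspect-exercise} certifies that $b_i'$ exercises in the money; without this, Lemma \ref{lemma:covered-call} would only give an inequality in the wrong direction. The other subtle point is choosing $M^*$ to be optimal with respect to $\kappa$-weights rather than $v$-weights, which is what lets Lemma \ref{lemma:covered-call} reduce bounding $\Welf(\Opt)$ to a covered-call matching problem on the same graph.
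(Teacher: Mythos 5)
Your proposal follows the paper's proof essentially verbatim: the same choice of $M^*$ (max-weight matching under $\kappa$-weights), the same use of Lemma~\ref{lemma:covered-call} to upper-bound $\Welf(\Opt)$, the same equilibrium step, the same passage from $u_i$ to $\bar u_i$ via Lemma~\ref{lemma:inspect-exercise} together with the equality case of Lemma~\ref{lemma:covered-call}, and the same restriction-and-split step feeding into Lemma~\ref{lemma:inspect-smooth}. The two subtleties you flag at the end (why the deviation must exercise in the money, and why $M^*$ should be chosen with respect to $\kappa$ rather than $v$) are exactly the ones the paper relies on, so this is correct and takes the same route.
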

\begin{proof}
  Let the random variable $M^*$ be the max-weight matching where the weight on edge $\{i,j\}$ is $\kappa_{ij} + \kappa_{ji}$.
  Lemma \ref{lemma:covered-call} also implies that $\Welf(\Opt)$ is at most the total weight of $M^*$, as follows.
  Here $I_{ij},A_{ij}$ are the indicators under the $\Opt$ procedure, and notice $A_{ij} = A_{ji}$ in any matching.
  \begin{align*}
    \Welf(\Opt) &= \E \sum_{\{i,j\} \in E} \left[ \left( A_{ij} v_{ij} - I_{ij} r_{ij} \right) +  \left( A_{ji} v_{ji} - I_{ji} r_{ji} \right) \right] \\
    &\leq \E \sum_{\{i,j\}} \left(A_{ij} \kappa_{ij} + A_{ji} \kappa_{ji}\right)  & \text{Lemma \ref{lemma:covered-call}}\\
    &\leq \E \sum_{\{i,j\} \in M^*} \left( \kappa_{ij} + \kappa_{ji} \right)  &\text{the solution must form a matching.}
  \end{align*}
  (We note that we have no idea what $\Opt$ actually is in this setting, or if it can even be computed in polynomial time; nevertheless, this upper bound cannot be too loose, since the MM approximates it.)
  Lemma \ref{lemma:inspect-exercise} states that the deviation strategy $b_i'$ exercises in the money, so Lemma \ref{lemma:covered-call} implies $\E u_i(b_{-i},b_i') = \E \bar{u}_i(b_{-i}, b_i')$.
  Therefore,
  \begin{align*}
    \Welf(b) &= \E \sum_{i} \left[u_i(b) + \frac{p_i(b)}{2} \right] \\
    &\geq \E\sum_{i} \left[ u_i(b_{-i},b_i') + \frac{p_{i}(b)}{2} \right] \\
    &=    \E\sum_{i} \left[ \bar{u}_i(b_{-i},b_i') + \frac{p_{i}(b)}{2} \right] \\
    &\geq \E\sum_{\{i,j\}\in M^*} \left( \left[ \bar{u}_i(b_{-i},b_i') + \frac{p_i(b)}{4} + \frac{p_j(b)}{4} \right] + \left[ \bar{u}_j(b_{-j},b_j') +   \frac{p_{i}(b)}{4} + \frac{p_{j}(b)}{4} \right] \right) \\
    &\geq \E \sum_{\{i,j\}\in M^*} \left[ \frac{\kappa_{ij}}{8} + \frac{\kappa_{ji}}{8} \right]
    \quad \geq \frac{1}{8} \Welf(\Opt) .
  \end{align*}
\end{proof}

\section{General Bids} \label{sec:general-bids}

%

It is natural to consider negative values and bids in a matching market.
Negative values model \emph{costs} incurred for a match, such as in a job market.
When a worker is matched as an employee to a company, she experiences some cost for which she must be compensated.
The goal of the matching market is to find an efficient price for her labor.
However, negative costs complicate matters because they also introduce negative bids.
A participant can make it difficult for a match to occur by bidding an arbitrary negative amount.
The result is that equilibrium is no longer sufficient for good welfare, even in bipartite graphs.

In this section, we first formalize the failure of equilibrium.
We then turn to \emph{stability} as a possible saviour.
We observe that approximate \emph{ex post} stability indeed implies good welfare in \emph{any mechanism}, and that the MM satisfies approximate ex post stability when participants are truthful.

However, ex post stability is a very strong requirement.
We formulate an alternative, \emph{ex ante} stability, and show that the MM has approximately optimal welfare in any ex-ante stable \emph{Nash} equilibrium.
Finally, however, we observe that this result does not naturally extend to Bayes-Nash equilibrium and illustrate the apparent difficulty involving coordinated communication.

\subsection{Model and failure of equilibrium}
We make one restriction on the general model of Section \ref{sec:prelim}: we assume the graph is bipartite.
This is primarily for notational and narrative convenience.
To make our presentation more intuitive, we adopt terminology in which the two sides of the bipartite market are asymmetric: One side (e.g. employers) are \emph{bidders}, while the other side (e.g. workers) are \emph{askers}.
The bidders are indexed by $i$.
They have values $v_{ij}$ and make bids $b_{ij}$.

The askers are indexed by $j$.
We assume they have \emph{costs} $c_{ij}$ and make \emph{asks} $a_{ij}$.
The costs and asks are simply the negative of their values and bids under the previous section's model.
Now, for example the surplus of a match between $i$ and $j$ is $s_{ij} = v_{ij} - c_{ij}$.
We still use $\pi_j$ to denote the net payment made by an asker.
So the utility of an asker $j$ for being matched to $i$ is $u_j = -c_{ij} - \pi_i$, and so on.

We generally picture values, bids, costs, and asks all as positive numbers, so a bidder has a positive value for matching to an asker, who incurs a positive cost from the match.
However, our results are all fully general and would allow for any value, bid, cost, or ask to be either positive or negative.



\paragraph{Failure of equilibrium.}
When asks are allowed, equilibrium becomes insufficient to provide welfare guarantees.
Participants in the mechanism can place bids and asks such as to effectively refuse matches with one another, by asking above value or bidding below cost.
If two players both ``refuse'' matches with one another, neither can unilaterally fix the situation.
We prove this result for the MM with 1/4 rebate, but the same profile is also a zero welfare equilibrium with no rebate.

\begin{proposition} \label{prop:bad-welfare-bipartite}
	In the bipartite setting with general values and costs, there always exists a Nash equilibria of the Marshallian Match with zero welfare.
\end{proposition}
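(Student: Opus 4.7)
The plan is to exhibit an explicit strategy profile in which every bidder makes an ``impossibly low'' bid and every asker an ``impossibly high'' ask, so that the matching condition $b_{ij}+b_{ji}\ge p(t)$ never fires, and then to verify that no unilateral deviation under the $1/4$-rebate payment rule can recover a positive utility. Concretely, fix $M$ strictly larger than $\max_{i,j}\{|v_{ij}|,|c_{ij}|\}$ and let every bidder $i$ commit to $b_{ij}(t)=-M$ on every feasible neighbor $j$ at every time $t$, and symmetrically let every asker $j$ commit to $a_{ij}(t)=M$ on every feasible neighbor $i$ (equivalently, in the symmetric notation, $b_{ji}(t)=-M$). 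Then for every feasible pair, $b_{ij}(t)+b_{ji}(t)=-2M<0\le p(t)$, which never holds; no matches occur and $\Welf(b)=0$.

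To verify the Nash property, consider any alternative schedule $b_i'$ for a bidder $i$. Since the MM reveals nothing to any agent until they match, every asker $j$ continues to post $a_{ij}(t)=M$ throughout, independent of $b_i'$. For the pair $(i,j)$ to clear at some time $t$ one would need $b_{ij}'(t)\ge a_{ij}(t)+p(t)=M+p(t)\ge M$. Under the $1/4$-rebate rule, $i$'s net payment at such a match is $(3b_{ij}'(t)+a_{ij}(t))/4\ge (3M+M)/4 = M$, so $i$'s utility from the deviation is at most $v_{ij}-M<0$, strictly worse than the $0$ utility of remaining unmatched. The symmetric calculation for an asker $j$ shows that any deviation forcing a match requires $a_{ij}'(t)\le -M$, yielding a net rebated transfer of at most $(3a_{ij}'(t)+b_{ij}(t))/4\le -M$ and thus utility at most $-c_{ij}-M<0$. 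Hence the profile is a Nash equilibrium of zero welfare.

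There is no deep obstacle in this proof; the core idea is simply that when both sides simultaneously make offers whose reported joint surplus is $-2M$, no single agent can correct this unilaterally. The one point worth checking carefully is the effect of the rebate: one must confirm that returning $1/4$ of the total payment does not compensate the deviator enough to offset the large swing in their own bid or ask required to cross the matching threshold against an opposing $\pm M$ offer. The calculation above shows it does not, since the remaining $3/4$ of the deviator's bid (or ask) still dominates any value or cost in the instance by the choice of $M$.
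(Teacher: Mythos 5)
Your proof is correct and takes essentially the same approach as the paper: exhibit a ``mutual refusal'' profile where the net bid on every edge is far below zero, and verify that the $1/4$-rebate is not large enough to rescue a unilateral deviator who must overcome the opposing offer. The only substantive difference is scope: the paper instantiates a single numerical example (complete bipartite, $v_{ij}=2$, $c_{ji}=1$, bids $0$, asks $4$), which suffices to break the price of anarchy, whereas your choice of $M$ dominating all $|v_{ij}|$, $|c_{ij}|$ yields a zero-welfare Nash equilibrium in \emph{every} bipartite instance, which more literally matches the proposition's ``always exist'' phrasing.
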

\begin{proof}
	Let $x = \max_{i,j} \max\{|v_{ij}|,|c_{ji}|\}$.
	Consider the strategy profile where every bidder $i$ bids $b_{ij} = -2x$ on all neighbors $j$, and every asker $j$ asks $a_{ji} = 2x$ on all neighbors $i$.
	This is an equilibrium in which no matches occur.
    For a unilateral deviation to cause a match to occur, e.g. a bidder would have to change a bid to at least $b_{ij} = 2x$, resulting in a net payment of at least $2x$ after the rebate, giving negative utility.
	The asker's case is analogous.
\end{proof}

In this example, any single player cannot deviate alone to improve her welfare.
However, any pair of bidders sharing an edge can coordinate a deviation together and guarantee themselves higher welfare.
This suggests that the bad equilibrium profile lacks \emph{stability}, a key concept in matching algorithm design.

\subsection{Ex Post Stability}
In classical ``stable matching'' problems~\cite{gale1962college}, the goal is that, once the mechanism produces a final matching, no two participants $i,j$ both prefer to leave their assigned partners and switch to matching each other instead.
We use \emph{ex post} to refer to the fact that this evaluation occurs after all randomness and the matching's outcome have been realized.
In our setting, if $i$ and $j$ chose to match each other, they would obtain a net utility of their surplus $s_{ij} = v_{ij} - c_{ij}$.
If this amount is larger than their total utility in the mechanism, then they could switch to each other and split the surplus so as to make them both better off.
On the other hand, making this switch presumably involves some amount of friction.
Therefore, we introduce an approximate version of stability, as a more lenient requirement of a mechanism.


\begin{definition}[Approximate ex post stability]
	A strategy profile $(b,a)$ in a mechanism is $k$-ex post stable if, for all realizations of costs and values and for all feasible pairs of bidder $i$ and asker $j$,
	\[u_i(b,a) + u_j(b,a)\geq \frac{1}{k}s_{ij} .\]
\end{definition}


We note that ex post stability is an extremely strong notion. For any matching mechanism, an ex post stable strategy profile produces an approximately optimal matching.

\begin{observation} \label{obs:ex-post-approx}
	For any matching mechanism, any $k$-ex post stable strategy is a $\tfrac{1}{k}$-approximation of the first-best welfare.
\end{observation}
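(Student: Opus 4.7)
The plan is to apply the $k$-ex post stability hypothesis edge-by-edge to the optimal matching and then sum. Fix a realization of all types (so we can work pointwise and take expectations at the end if needed). Let $M^*$ denote the corresponding first-best matching, so $\Welf(\Opt) = \sum_{\{i,j\}\in M^*} s_{ij}$ for this realization. For each $\{i,j\}\in M^*$, the pair $\{i,j\}$ is feasible, so the stability hypothesis gives directly $u_i(b,a) + u_j(b,a) \geq s_{ij}/k$. Summing over $M^*$ yields
\[
  \sum_{\{i,j\}\in M^*} \bigl(u_i(b,a) + u_j(b,a)\bigr) \;\geq\; \frac{1}{k}\,\Welf(\Opt).
\]

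Next I would relate this partial sum of utilities to the realized welfare $\Welf(b,a)$. Because $M^*$ is a matching, each agent it covers contributes its utility at most once to the left-hand side above, so the sum equals $\sum_{v \in V(M^*)} u_v$, where $V(M^*)$ is the set of agents covered by $M^*$. Under individual rationality (voluntary participation, which is standard for any matching mechanism with money) every $u_v \geq 0$, hence $\sum_{v\in V(M^*)} u_v \leq \sum_v u_v$. The key accounting identity for any matching mechanism is $\sum_v u_v = \Welf(b,a) - \Pi(b,a)$, where $\Pi(b,a)$ is the total net payment collected by the mechanism: on each matched pair $\{i,j\}$ in the realized matching the two utilities add to $s_{ij} - (\pi_i+\pi_j)$, and unmatched agents contribute zero. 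Assuming the mechanism does not subsidize on net (i.e., $\Pi \geq 0$), we get $\sum_v u_v \leq \Welf(b,a)$. Chaining everything,
\[
  \Welf(b,a) \;\geq\; \sum_v u_v \;\geq\; \sum_{v\in V(M^*)} u_v \;\geq\; \frac{1}{k}\,\Welf(\Opt),
\]
pointwise in the realization. In the Bayes-Nash setting, taking expectations over types preserves the bound because it holds for every realization.

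I do not expect a serious obstacle here; this is essentially a one-display proof. The only subtlety is the middle step, where one needs both individual rationality (to throw in the remaining nonnegative utilities) and budget feasibility of the mechanism (to discard the payment term $\Pi \geq 0$). Both are standing assumptions for the class of ``matching mechanisms'' considered in the paper, so the argument goes through for any such mechanism, as the observation claims.
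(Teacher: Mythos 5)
Your proof is correct and follows the same plan as the paper's: sum the $k$-ex post stability bound over edges of the first-best matching $M^*$, then relate that sum of utilities to the realized welfare. Where your version adds value is in the intermediate step. The paper asserts an equality $\E\left[\sum_{\{i,j\}\in M}u_i+u_j\right]=\E\left[\sum_{\{i,j\}\in M^*}u_i+u_j\right]$ (with $M$ the realized matching), which is not literally true: the left side equals total utility $\sum_v u_v$ because unmatched agents contribute zero, while the right side omits any agents uncovered by $M^*$, so the step should really be an inequality and requires nonnegativity of the omitted utilities. Your decomposition --- throw in the remaining utilities under individual rationality, then drop a nonnegative payment term under no-subsidy --- makes both implicit standing assumptions explicit and is the right way to close that gap. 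The one caveat is that these two hypotheses are not actually stated in the paper for ``any matching mechanism,'' so your proof (like the paper's) is correct only for individually rational, non-subsidizing mechanisms; if one truly takes ``any matching mechanism'' at face value, the observation as stated is slightly over-claimed, and you were right to flag the assumptions rather than let them pass silently.
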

\begin{proof}
	Recall that in the Bayes-Nash setting agents' types are drawn from from a joint distribution $\D$.
	Let $(b,a)$ be a $k$-ex post stable strategy profile for some matching mechanism, and $M$ the associated matching.
	The expected welfare over all realizations of types and strategies is at least the total utility of participants,
	\begin{align*}
		\Welf(b,a) \geq \E\left[\sum_{\{i,j\}\in M}u_i(b,a) + u_j(b,a)\right] = \E\left[\sum_{\{i,j\}\in M^*}u_i(b,a) + u_j(b,a)\right],
	\end{align*}
	where $M^*$ is the maximal matching by surplus.
	Ex post stability gives a lower-bound on utility for all pairs, regardless of realization, so
	\begin{align*}
		\Welf(b,a) \geq \E\left[\sum_{\{i,j\}\in M'}\frac{s_{ij}}{k}\right] =  \frac{1}{k}\Welf(\Opt)
	\end{align*}
\end{proof}
Observation \ref{obs:ex-post-approx} also arises directly from a primal-dual analysis of the linear program for bipartite matching, in which the dual variables are the utilities of the agents and $k$-approximate satisfaction of the dual constraints is precisely $k$-ex post stability.
We note that while there exist deferred-acceptance style ``stable'' matching mechanisms that technically involve money, such as matching with contracts~\cite{hatfield2005matching}, we have not ascertained if they can be made to satisfy approximate ex post stability in our sense.

\paragraph{Ex post stability of the Marshallian Match under truthfulness.}
Ideally, a matching mechanism would admit equilibria in ex post stable strategies, so that welfare would be high and participants would adhere to the outcomes of the mechanism.
But as a weaker requirement, we would like an indication of whether approximately ex post stable strategies might be reasonable in a mechanism.
We present a simple $4$-ex post stable strategy profile for the $1/4$-rebate MM, for all realizations of costs and values.

\begin{proposition} \label{prop:truthful-ex-post}
	The truthful strategy in which all participants bid or ask their true value or cost is $4$-ex post stable in the $1/4$-rebate Marshallian Match.
\end{proposition}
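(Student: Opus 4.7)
My plan is to exploit two clean facts about the truthful profile. First, since every edge $\{i,j\}$ carries total net bid $v_{ij}-c_{ji}=s_{ij}$ under truthful reports, the descending clock matches available edges in order of decreasing surplus, i.e.\ the mechanism executes the greedy matching. Second, when $(i,j)$ clears at clock price $s_{ij}$, a short computation using the rebate $(b_{ij}-a_{ji})/4=s_{ij}/4$ gives $u_i=v_{ij}-(v_{ij}-s_{ij}/4)=s_{ij}/4$ and symmetrically $u_j=s_{ij}/4$, so the matched pair's combined utility is $s_{ij}/2$. With these two facts, the stability inequality $u_i(b,a)+u_j(b,a)\geq s_{ij}/4$ for an arbitrary feasible pair $(i,j)$ reduces to a short case split.

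For a fixed feasible pair $(i,j)$ I would consider four cases according to how each is matched. (a) If $i$ and $j$ are matched to one another, their combined utility is $s_{ij}/2\geq s_{ij}/4$. (b) If both are unmatched, the clock reached every nonnegative value, so if $s_{ij}>0$ the pair would have cleared at $p(t)=s_{ij}$; hence $s_{ij}\leq 0$, and the stability bound $0\geq s_{ij}/4$ holds. (c) If exactly one is matched, say $i$ to some $j'$ at clock $s_{ij'}\geq 0$, then just above clock $s_{ij'}$ both $i$ and $j$ were still unmatched, so $s_{ij}>s_{ij'}$ would force $\{i,j\}$ to have cleared earlier, contradicting $i$'s eventual match; thus $s_{ij'}\geq s_{ij}$ and $u_i+u_j=s_{ij'}/4+0\geq s_{ij}/4$. (d) If both are matched to distinct partners $j'$ and $i'$, let $(i,j')$ be the earlier of the two matches, so $s_{ij'}\geq s_{i'j}$; the same greedy argument applied just above clock $s_{ij'}$ gives $s_{ij'}\geq s_{ij}$, and since $s_{i'j}\geq 0$ (matches form only at nonnegative clock), $u_i+u_j=s_{ij'}/4+s_{i'j}/4\geq s_{ij}/4$.

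The only subtleties are negative-surplus pairs, which are trivially stable because $s_{ij}/4\leq 0$ while unmatched participants have utility $0$, and simultaneous-clearing ties, which can be resolved by any fixed tie-breaking convention without affecting the chain of contradictions. Apart from those minor wrinkles I do not anticipate a real obstacle; the proposition is essentially a per-pair restatement of the fact that greedy matching is a $1/2$-approximation, composed with the extra factor of $1/2$ that the $1/4$-rebate hands back to each matched participant.
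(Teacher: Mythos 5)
Your proof is correct and takes essentially the same route as the paper: observe that truthful play executes the greedy surplus-ordered matching, compute each matched participant's utility as one quarter of their match surplus via the rebate, and then use the greedy ordering to argue that for any pair $\{i,j\}$ not matched to each other, one of the two cleared earlier at surplus at least $s_{ij}$ while the other's utility is nonnegative. The paper compresses your cases (b)--(d) into a single ``one of them matched earlier at higher surplus, WLOG $i$'' step, but the substance is identical.
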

\begin{proof}
	When all participants truthfully report their values and costs, Marshallian Match produces the greedy maximum weighted matching $M$, where edge weights are supluses $s_{ij} = v_{ij} - c_{ij}$ (and negative edges are discarded).
	Let $(b^*,a^*)$ denote the strategy profile in which all agents are truthful.

	Consider any pair $\{i,j\}$.
	We wish to show that $u_i(b^*,a^*) + u_j(b^*,a^*) \geq \frac{s_{ij}}{4}$.
	If $\{i,j\}\in M$, then $i$ and $j$ each obtain utility exactly $s_{ij}/4$, as their value (cost) is canceled by their bid (ask) and they are left with the rebate.
	If $\{i,j\}\notin M$, then either $i$ or $j$ must have already matched before the descending price reached $s_{ij}$.
	Without loss of generality, say $i$ matched to $k$ before $s_{ij}$, with $s_{ik} > s_{ij}$.
	By truthfulness, $i$ obtains welfare $u_i(b^*,a^*) = s_{ik}/4 > s_{ij}/4$.
\end{proof}
Again, once one connects the truthful MM to greedy matching, Proposition \ref{prop:truthful-ex-post} follows from a basic primal-dual analysis combined with the rebate payment rule.

While truthful reporting from all participants produces approximately optimal social welfare (and the participants capture half of it), truthfulness is not in general even an approximate equilibrium.
We provide an example in the $1/4$-rebate MM setting in which a player can increase her welfare arbitrarily by overstating her true value for a match.
If players can make significant gains by deviating from the truthful strategy, it is likely that they will not adhere to an ex post stable profile.

\begin{example}[Figure \ref{fig:non-truthful}]
	Consider the bipartite graph with three participants: $A$ and $B$ place bids on matching to $C$.
	$C$ has cost $c_{CA} = c_{CB} = 0$ for both matches, $A$ has value $v_{AC} = k+1$ and $B$ has value $v_{BC} = k$ for matching with $C$.
	In the truthful profile, $A$ is matched to $C$, and $B$ goes unmatched with welfare $0$.
	If $B$ deviates to the non-truthful strategy of bidding $b'_{BC} = k+2$, $B$ would be matched with $C$ and would receive utility $u_B(b_{-B}^*,a^*,b') = k - (k+2) + (k+2)/4 = k/4-3/2$.
	Thus, picking $k$ appropriately, $B$ can benefit arbitrarily by deviating from the truthful strategy.
	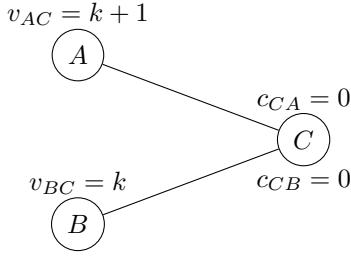
\begin{figure}
		\begin{center}
		\begin{tikzpicture}[
			every node/.style={circle,draw},scale=.75]

			\node (A) at (-2,1.5) {$A$};
			\node (B) at (-2,-1.5) {$B$};
			\node (C) at (2,0) {$C$};

			\draw (A) node[right,rectangle,above,sloped,draw=none,fill=none,outer ysep=8pt]{$v_{AC} = k+1$} -- (C) node[left,rectangle,above ,sloped,draw=none,fill=none,outer ysep=8pt]{$c_{CA} = 0$};
			\draw (B) node[right,rectangle,above,sloped,draw=none,fill=none,outer ysep=8pt]{$v_{BC} = k$} -- (C) node[left,rectangle,below ,sloped,draw=none,fill=none,outer ysep=8pt]{$c_{CB} = 0$};
		\end{tikzpicture}
		\end{center}
		\caption{In the $1/4$-rebate MM, $B$ is incentivized to deviate from truthfulness to the non-truthful bid $k+2$.}
		\label{fig:non-truthful}
	\end{figure}
\end{example}

\paragraph{Drawbacks of ex post stability.}
If one can prove that a mechanism is ex post stable in equilibrium, that is an ideal result as a welfare guarantee immediately follows.
However, without such a result, the value of ex post stability is questionable.
Consider the following somewhat subtle point.
Intuitively, it may seem reasonable to view stability as a sort of equilibrium refinement.
In particular, if strategy profile $b$ is not approximately stable, then (one would think) there are two participants $i$ and $j$ who would prefer to jointly switch their strategies so as to match to each other.
However, \emph{ex post} stability does not give this kind of guarantee.
It can only tell $i$ and $j$ whether they are satisfied after the mechanism happens.
To capture the above intuition, we turn to \emph{ex ante} stability.

\subsection{Ex Ante Stability}\label{subsec:ex_ante}
We now consider a model of stability that generalizes equilibrium by supposing no \emph{pair} of participants has an incentive to \emph{bilaterally} deviate.
Importantly, the incentive is relative to expected utility, so it involves an \emph{ex ante} calculation by the participants rather than ex post.

\begin{definition}[ex ante stability] \label{def:ex-ante}
	A strategy profile $(b,a)$ is $k$-ex ante stable if, for all feasible pairs of bidder $i$ and asker $j$, for all strategies $b_i'$ of $i$ and $a_j'$ of $j$,
	\[\E[u_i(b,a) + u_j(b,a)]\geq \frac{1}{k}\E[u_i(b_{-i},a_{-j},b_i',a_j') + u_j(b_{-i},a_{-j},b_i',a_j')] , \]
	with randomness taken over realizations of types and strategies.
\end{definition}
That is, a profile is $k$-ex ante stable if there exists no deviation for any pair of players by which they could expect to increase their collective welfare by a factor of more than $k$.
There are two primary differences between ex ante and ex post stability.
First, ex ante applies to preferences ``before the fact'', i.e. in expectation, while ex post applies to preferences ``after the fact''.
Second, ex post stability postulates the ability of two participants to completely bypass the rules of the mechanism and match to each other.
In ex ante stability, the participants are limited to deviating to strategies actually allowed by the mechanism.

We observe that an ex ante stable profile is by definition a Bayes-Nash equilibrium, as one can in particular consider profiles where only one of the two participants deviates.
We show that for deterministic values and costs, ex ante stability is in fact sufficient to guarantee an approximation of optimal welfare.

\paragraph{Smoothness and deviation strategies.}
We define a pairwise deviation of $\{i,j\}$ to truthfulness for bids and asks as follows:  $b_{i\ell}' = v_{i\ell}$ for all feasible neighbors $\ell$, and similarly $a_{j\ell}' = c_{j\ell}$ for all neighbors $\ell$.
For any pairwise deviation to truthfulness, we show that that pair collectively achieves a constant fraction of their welfare in the $1/4$-rebate MM.
\begin{lemma}\label{mechanism-splits-smoothness}
	For any pair of feasible neighbors $i$ and $j$, for any strategy profile $(b,a)$,
	\begin{equation*}
		u_i(b_{-i}a_{-j},b_i'a_j') + u_j(b_{-i}a_{-j},b_i'a_j') \geq \frac{s_{ij}}{4} .
	\end{equation*}
\end{lemma}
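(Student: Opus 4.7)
The plan is to unpack what the truthful deviation does in the $1/4$-rebate MM, use the key algebraic identity that a truthful matched participant's utility equals one quarter of the clearing price, and then do a short case analysis on when the match between $i$ and $j$ fires.

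First I would compute the utilities under truthful deviation in closed form. If a bidder $i$ uses $b_{i\ell}' = v_{i\ell}$ and is matched with some neighbor $\ell$ at price $p(t)$, the match condition $b_{i\ell}' - a_{\ell i} = p(t)$ combined with the rebate gives $u_i = v_{i\ell} - v_{i\ell} + p(t)/4 = p(t)/4$; symmetrically $u_j = p(t')/4$ if $j$ matches at $p(t')$ with $a_{j\ell}' = c_{j\ell}$. Since the price never goes below $0$ (matches only occur while $p(t)\ge 0$), both of these quantities are nonnegative, and of course they are $0$ if the agent goes unmatched. This is exactly the ``alignment of utility with clearing price'' described in the section's intuition, and it is the step that requires the rebate.

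Next I would split on the sign of $s_{ij}$. If $s_{ij} \le 0$, the bound is immediate because both utilities are nonnegative. Assume $s_{ij} > 0$ and consider the state of the mechanism as the price descends to $s_{ij}$. Either (a) one of $i,j$ has already been matched by the time the price reaches $s_{ij}$, or (b) both are still unmatched at that instant. In case (a), that participant matched at some price $p(t) \ge s_{ij}$ and so contributed utility $\ge s_{ij}/4$; the other participant's utility is nonnegative, so the sum is $\ge s_{ij}/4$. In case (b), under the deviation the net bid on edge $\{i,j\}$ equals $b_{ij}' - a_{ji}' = v_{ij} - c_{ji} = s_{ij}$, so the edge's triggering condition is met exactly at $p(t) = s_{ij}$; hence $i$ and $j$ are matched (to each other, or in the event of a simultaneous tie, to someone at the same price $s_{ij}$) and each earns utility $s_{ij}/4$, giving sum $s_{ij}/2 \ge s_{ij}/4$.

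The main obstacle I anticipate is being careful about case (b): I need to argue that at price $p(t) = s_{ij}$ the edge $\{i,j\}$ actually does clear, rather than slipping through because some third edge clears at that same moment and steals one of the agents. The cleanest way to handle this is to note that any simultaneous match involving $i$ (respectively $j$) must itself occur at price $\ge s_{ij}$, which still delivers utility $\ge s_{ij}/4$ to that agent by the closed-form computation, so the sum inequality holds either way. The remaining details are just bookkeeping on the rebate payment and the sign conventions for bids versus asks.
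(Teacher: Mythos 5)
Your proof is correct and follows essentially the same route as the paper: compute that under truthful deviation a matched participant's utility equals one quarter of the clearing price, then case-split on whether $i$, $j$, or neither has matched by the time the price descends to $s_{ij}$. Your explicit handling of $s_{ij}\le 0$ and of simultaneous ties are small refinements not spelled out in the paper's version, but the argument is the same.
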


\begin{proof}
	If $i$ and $j$ are still unmatched in deviation when the price reaches $s_{ij}$, then they will be matched and each get utility $s_{ij}/4$, since their bid-ask spread is $s_{ij}$.

	If $i$ matches to some $\ell\neq j$ before $s_{ij}$, we claim $i$ still achieves high utility. Since $i$ is truthful, she obtains utility $u_i(b_{-i}a_{-j},b_i'a_j') = v_{i\ell} - b_{i\ell}' + (b_{i\ell}' - a_{\ell i})/4 = (b_{i\ell}' - a_{\ell i})/4 > s_{ij}/4$, since in deviation the bid-ask spread on the edge $(i,\ell)$ is greater than that on edge $(i,j)$.

	If $j$ matches to some $\ell\neq i$ before $s_{ij}$, then $j$ achieves high utility. Since $j$ is truthful, she obtains utility $u_j(b_{-i}a_{-j},b_i'a_j') = a_{\ell j}' - c_{\ell j}  + (b_{\ell j} - a_{j \ell}')/4 = (b_{\ell j} - a_{j \ell}')/4  > s_{ij}/4$, again since $(\ell,j)$ has a higher bid-ask spread in deviation than $(i,j)$.
\end{proof}

\begin{theorem} \label{thm:mechanism-splits-PoA}
	For deterministic values and costs, and strategy profile $(b,a)$ that is $k$-ex ante stable, the $1/4$-rebate Marshallian Match achieves a $\frac{1}{4k}$-approximation of the optimal expected welfare.
\end{theorem}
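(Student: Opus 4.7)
The plan is to combine the pairwise deviation lemma (Lemma \ref{mechanism-splits-smoothness}) with the ex ante stability assumption to lower-bound per-edge welfare, then sum over the optimal matching. Since values and costs are deterministic, all expectations collapse.

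First I would fix an arbitrary optimal matching $M^*$ and a $k$-ex ante stable profile $(b,a)$. For each edge $\{i,j\} \in M^*$, applying the $k$-ex ante stability condition to the pair $(i,j)$ with the truthful deviation $(b_i',a_j')$ gives
\[ u_i(b,a) + u_j(b,a) \geq \frac{1}{k}\bigl(u_i(b_{-i},a_{-j},b_i',a_j') + u_j(b_{-i},a_{-j},b_i',a_j')\bigr) \geq \frac{s_{ij}}{4k}, \]
where the second inequality is exactly Lemma \ref{mechanism-splits-smoothness}.

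Next I would sum this inequality over all $\{i,j\} \in M^*$. Because $M^*$ is a matching, each agent appears in at most one edge, so the left-hand sum telescopes to at most $\sum_\ell u_\ell(b,a)$ over all participants. Here I need the side observation that in any ex ante stable profile each participant's utility is nonnegative: any single player can deviate to refuse all matches (bidding $-\infty$ or asking $+\infty$), which yields utility $0$, and ex ante stability implies Bayes–Nash equilibrium (take the ``other'' player's deviation to be her equilibrium strategy), so no player is worse off than this outside option. Therefore
\[ \sum_{\{i,j\} \in M^*}\bigl(u_i(b,a) + u_j(b,a)\bigr) \leq \sum_\ell u_\ell(b,a). \]

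Finally I would observe that welfare is the sum of utilities plus mechanism revenue, and since the $1/4$-rebate rule keeps half of every realized surplus (which is nonnegative whenever a match occurs, as otherwise truthful deviation by either side would block the match), revenue is nonnegative. Thus $\Welf(b,a) \geq \sum_\ell u_\ell(b,a) \geq \sum_{\{i,j\}\in M^*} s_{ij}/(4k) = \Welf(\Opt)/(4k)$.

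The main obstacle I anticipate is the nonnegativity-of-utility step: one must justify that in an ex ante stable profile no participant is ``trapped'' into a negative-utility match, and that realized match surpluses are nonnegative so the rebate rule yields nonnegative revenue. Both follow from the existence of a trivial ``refuse everything'' deviation available to any single agent, combined with the fact that ex ante stability implies Nash equilibrium as a special case. The rest of the argument is purely algebraic summation, which is why the proof reduces cleanly once the smoothness lemma and stability assumption are in hand.
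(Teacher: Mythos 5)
Your proposal is correct and follows essentially the same chain of inequalities as the paper's proof: drop mechanism revenue, pass from the realized matching to $M^*$, apply ex ante stability edge-by-edge to the pair $(i,j)\in M^*$ (which is valid precisely because $M^*$ is deterministic here), and then invoke Lemma~\ref{mechanism-splits-smoothness}. You also helpfully make explicit the step the paper glides over, namely that $\sum_{\{i,j\}\in M^*}(u_i+u_j)\le\sum_\ell u_\ell$ requires each agent's expected utility to be nonnegative, which follows since ex ante stability implies Bayes--Nash and an agent can always refuse all matches. One small nit: revenue nonnegativity does not need an equilibrium/deviation argument as you suggest --- it is mechanical, since a match clears only when $b_{ij}-a_{ji}\ge p(t)\ge 0$, so the mechanism's retained half of the reported spread is automatically nonnegative.
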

Surprisingly, in this result, the contribution of payments to overall welfare can be ignored.
We will actually obtain that total participant surplus is a constant fraction of the optimal welfare.

\begin{proof}[Proof of Theorem \ref{thm:mechanism-splits-PoA}]
	Let $M$ denote the matching produced by the strategy profile $(b,a)$, and $M^*$ denote the first-best welfare matching. The welfare of $M$ is
	\begin{align*}
			\textsc{Welf}(b,a)  &= \E\left[\sum_{\{i,j\}\in M}u_i(b,a) + u_j(b,a) + p_i(b,a)\right]\\
			 &\geq \E\left[\sum_{\{i,j\}\in M^*}u_i(b,a) + u_j(b,a)\right]\\
			&\geq \E\left[\frac{1}{k}\sum_{\{i,j\}\in M^*}u_i(b_{-i},a_{-i},b_i',a_j') + u_j(b_{-i},a_{-i},b_i',a_j')\right]
	\end{align*}
	since the values and costs of agents are deterministic and common knowledge, each agent can compute her partner in the first-best welfare matching, and can coordinate to deviate as a pair, guaranteeing a lower bound on welfare for both. Applying Lemma \ref{mechanism-splits-smoothness},
		\begin{align*}
			\textsc{Welf}(b,a) \geq \frac{1}{k}\E\left[\sum_{\{i,j\}\in M^*} \frac{s_{ij}}{4}\right] = \frac{1}{4k}\textsc{Welf}(\textsc{Opt})
	\end{align*}
\end{proof}

This result relies heavily on participants' ability to calculate the fixed matching with optimal social welfare, and coordinate deviations with their partner.
We exploit the deterministic costs and values considered to fix the optimal matching $M^*$ across realizations of potentially mixed strategies.

\subsection{Discussion: failure of the proof in the Bayes-Nash setting}
In this section, we discuss the main open problem: proving welfare guarantees in the Bayes-Nash setting under a reasonable stability definition.

\begin{openprob}
  Give a ``reasonable'' stability assumption and a mechanism for the matching model of \ref{sec:prelim}, with general values, such that, in the Bayes Nash setting, every ``stable'' strategy profile generates a constant factor of the optimal expected welfare.
\end{openprob}
In the Bayes-Nash setting, the distributions over values and costs are common knowledge, but their realizations are not.
As a result, the optimal matching $M^*$ is dependent on the realizations of players' types.
This causes our proof of welfare approximation under ex ante stability to fail in an interesting way.
In fact, even the definition of ex ante stability (Definition \ref{def:ex-ante}) has nuanced implications.

Consider a complete bipartite $n \times n$ graph with a ``star-crossed lovers'' distribution on types\footnote{One can create a version where type distributions are independent that makes roughly the same point.}: there is a uniformly random perfect matching $M^*$ in which the edges have positive surplus, while all edges not in $M^*$ have high costs and low values.
Even if we take a very bad mechanism, such as one that always assigns the same matching $M$ regardless of types, it can be ex ante stable: any particular pair are getting low utility, but if they are able to jointly deviate to matching with each other deterministically, they also get low utility in expectation over the type distribution.
They are unlikely to be fated for each other once the types are realized.

Similarly, recall that the proof of Theorem \ref{thm:mechanism-splits-PoA} used ex ante stability in the following step:
\begin{align*}
	\E\left[\sum_{(i,j)\in M^*}u_i(b,a) + u_j(b,a)\right]
	&\geq \E\left[\frac{1}{k}\sum_{(i,j)\in M^*}u_i(b_{-i},a_{-i},b_i',a_j') + u_j(b_{-i},a_{-i},b_i',a_j')\right] . 
\end{align*}
In the Nash setting, $M^*$ was fixed, and so were the deviation strategies $b_i',a_j'$ which depended on $M^*$.
So the expectation could move inside the sum, followed by an application of ex ante stability.
But in a Bayes-Nash setting, $M^*$ is a random variable depending on the realizations of the types.
We cannot move the expectation inside the sum.

A tempting fix is some sort of \emph{ex interim stability} assumption, where the deviation strategies of $i$ and $j$ can depend on the types of both players.
In the star-crossed lovers example, this is appealing: the pairs with high surplus know this fact from their types and can easily coordinate a deviation.
So it is reasonable to assume that strategy profiles played in the mechanism are not much worse than such coordinated deviations.

However, the amount of coordination required grows significantly if type distributions are more complicated.
For example, suppose each value and cost comes from an independent power-law distribution.
Finding blocking joint deviations seems to require significant knowledge by the agents, perhaps of global properties of the type space (such as who would be matched under the greedy matching or optimal matching).
To assume agents play strategies that eliminate such deviations appears to unjustly relieve the mechanism of responsibility to coordinate agents' information and decisionmaking.

\section{Conclusion and Future Work} \label{sec:conclusion}
Two-sided matching is difficult, even in such simplified abstract models as in this paper, for at least three reasons:
\begin{itemize}
  \item We would like the process to accommodate information acquisition in a way that is compatible with optimal search theory.
  \item The process is generally dynamic and sequential (for the previous reason), and strategic behavior in dynamic settings is complicated.
  \item The constraints are complex and interlocking, i.e. $i$ can match to $j$ if and only $j$ matches to $i$, yet their preferences over this event can be conflicting and contextualized by their other options.
\end{itemize}
The variants of the Marshallian Match studied in this paper address each challenge to some extent.
At least in the nonnegative values setting, the MM is compatible with optimal search because matches are coordinated to occur approximately in order from highest surplus to lowest.
This allows inspections to occur approximately in order of their ``index'' (``strike price'') from the Pandora's box problem.
In particular, it enables the useful technical property of ``exercising in the money'', i.e. a bidder who inspects at a late stage and discovers a very valuable match is able to unilaterally lock in that match.

Dynamic strategizing is addressed by strictly limiting information leakage, i.e. each participant can only see their own bid and learn when they match.
It is unclear whether this feature actually makes the mechanism better, but it does make it easier to analyze.
We believe that all of our results extend if participants are able to observe when any match occurs, but this would require careful formalization as the game becomes truly dynamic in that case.

The complex constraints are addressed to an extent by the coordination of matches in order of value.
A bidder $i$ in the $1/4$-rebate MM has deviation strategies available in which the timing of their match corresponds to their utility.
If another participant bids high enough to lock in a match with $i$ early, this is out of $i$'s direct control, but $i$ can bid so that they are assured of enough utility to make the match worthwhile.

\paragraph{Future work.}
There are a number of appealing variants on the model and directions for future investigation.
In the job market application, an interview is a \emph{simultaneous} inspection event between a worker and employer.
Can the MM's welfare guarantees extend to a model where inspection is simultaneous, even in the positive-values setting?
Other variations can include multiple stages of inspection (an extension in \citet{kleinberg2016descending}) or matching where inspection is optional (studied algorithmically by \citet{beyhaghi2019pandora}).

Another direction involves reasonable stability assumptions and their impact on monetary mechanism design for matching.
For example, a strategy profile seems somewhat unlikely if it allows for the following sorts of ``Stackelberg deviations'': participant $i$ announces a deviation strategy to all of their neighbors on the other side of the market, and those neighbors respond with their own deviations (perhaps best responses).

Finally, the MM itself admits a number of possible variations.
One additional benefit of the rebate payment rule is that it disincentivizes overbidding and under-asking, because participants do not need to strategically bid to capture value: the payment rule returns it to them as a rebate.
But is $1/4$ the optimal rebate level?
Our results seem to extend, and perhaps guarantee an even better welfare result, when the full payment is returned to bidders as a rebate.
However, such a mechanism seems suspicious, as bidders can manipulate their bids significantly without penalty.
Also interesting is the opposite extreme, a rebate of zero where all participants simply pay their bids.
We conjecture that the zero-rebate MM also achieves a constant PoA in the nonnegative bids setting.

A more sophisticated approach involves a two-part bid consisting of a reserve bid $b_{ij}$ and a ``surplus bid'' $\beta_{ij} \in (0,1)$ (respectively for askers, a reserve $a_{ji}$ and surplus $\alpha_{ji}$).
When a match occurs at $p(t) = b_{ij} - a_{ij}$, the bidder receives a rebate of $\beta_{ij} p(t)$ while the asker receives a rebate of $\alpha_{ji} p(t)$.
Perhaps a variant like this achieves a price of anarchy result, or a stability-based welfare guarantee, for the general setting.

\vfill
\break

\bibliographystyle{ACM-Reference-Format}
\bibliography{citations}


\begin{thebibliography}{22}


\ifx \showCODEN    \undefined \def \showCODEN     #1{\unskip}     \fi
\ifx \showDOI      \undefined \def \showDOI       #1{#1}\fi
\ifx \showISBNx    \undefined \def \showISBNx     #1{\unskip}     \fi
\ifx \showISBNxiii \undefined \def \showISBNxiii  #1{\unskip}     \fi
\ifx \showISSN     \undefined \def \showISSN      #1{\unskip}     \fi
\ifx \showLCCN     \undefined \def \showLCCN      #1{\unskip}     \fi
\ifx \shownote     \undefined \def \shownote      #1{#1}          \fi
\ifx \showarticletitle \undefined \def \showarticletitle #1{#1}   \fi
\ifx \showURL      \undefined \def \showURL       {\relax}        \fi
\providecommand\bibfield[2]{#2}
\providecommand\bibinfo[2]{#2}
\providecommand\natexlab[1]{#1}
\providecommand\showeprint[2][]{arXiv:#2}

\bibitem[\protect\citeauthoryear{Ashlagi, Braverman, Kanoria, and Shi}{Ashlagi
  et~al\mbox{.}}{2020}]%
        {ashlagi2020clearing}
\bibfield{author}{\bibinfo{person}{Itai Ashlagi}, \bibinfo{person}{Mark
  Braverman}, \bibinfo{person}{Yash Kanoria}, {and} \bibinfo{person}{Peng
  Shi}.} \bibinfo{year}{2020}\natexlab{}.
\newblock \showarticletitle{Clearing matching markets efficiently: informative
  signals and match recommendations}.
\newblock \bibinfo{journal}{\emph{Management Science}} \bibinfo{volume}{66},
  \bibinfo{number}{5} (\bibinfo{year}{2020}), \bibinfo{pages}{2163--2193}.
\newblock


\bibitem[\protect\citeauthoryear{Beyhaghi and Kleinberg}{Beyhaghi and
  Kleinberg}{2019}]%
        {beyhaghi2019pandora}
\bibfield{author}{\bibinfo{person}{Hedyeh Beyhaghi} {and}
  \bibinfo{person}{Robert Kleinberg}.} \bibinfo{year}{2019}\natexlab{}.
\newblock \showarticletitle{Pandora's Problem with Nonobligatory Inspection}.
  In \bibinfo{booktitle}{\emph{Proceedings of the 2019 ACM Conference on
  Economics and Computation}} \emph{(\bibinfo{series}{EC 2019})}.
  \bibinfo{publisher}{Association for Computing Machinery},
  \bibinfo{pages}{131–132}.
\newblock
\showISBNx{9781450367929}
\urldef\tempurl%
\url{https://doi.org/10.1145/3328526.3329626}
\showDOI{\tempurl}


\bibitem[\protect\citeauthoryear{Chade, Eeckhout, and Smith}{Chade
  et~al\mbox{.}}{2017}]%
        {chade2017sorting}
\bibfield{author}{\bibinfo{person}{Hector Chade}, \bibinfo{person}{Jan
  Eeckhout}, {and} \bibinfo{person}{Lones Smith}.}
  \bibinfo{year}{2017}\natexlab{}.
\newblock \showarticletitle{Sorting through search and matching models in
  economics}.
\newblock \bibinfo{journal}{\emph{Journal of Economic Literature}}
  \bibinfo{volume}{55}, \bibinfo{number}{2} (\bibinfo{year}{2017}),
  \bibinfo{pages}{493--544}.
\newblock


\bibitem[\protect\citeauthoryear{Che and Tercieux}{Che and Tercieux}{2019}]%
        {che2019efficiency}
\bibfield{author}{\bibinfo{person}{Yeon-Koo Che} {and} \bibinfo{person}{Olivier
  Tercieux}.} \bibinfo{year}{2019}\natexlab{}.
\newblock \showarticletitle{Efficiency and stability in large matching
  markets}.
\newblock \bibinfo{journal}{\emph{Journal of Political Economy}}
  \bibinfo{volume}{127}, \bibinfo{number}{5} (\bibinfo{year}{2019}),
  \bibinfo{pages}{2301--2342}.
\newblock


\bibitem[\protect\citeauthoryear{Chen and He}{Chen and He}{2021}]%
        {chen2021information}
\bibfield{author}{\bibinfo{person}{Yan Chen} {and} \bibinfo{person}{YingHua
  He}.} \bibinfo{year}{2021}\natexlab{}.
\newblock \showarticletitle{Information acquisition and provision in school
  choice: a theoretical investigation}.
\newblock \bibinfo{journal}{\emph{Economic Theory}} (\bibinfo{year}{2021}),
  \bibinfo{pages}{1--35}.
\newblock


\bibitem[\protect\citeauthoryear{Feldman, Immorlica, Lucier, Roughgarden, and
  Syrgkanis}{Feldman et~al\mbox{.}}{2016}]%
        {feldman2016price}
\bibfield{author}{\bibinfo{person}{Michal Feldman}, \bibinfo{person}{Nicole
  Immorlica}, \bibinfo{person}{Brendan Lucier}, \bibinfo{person}{Tim
  Roughgarden}, {and} \bibinfo{person}{Vasilis Syrgkanis}.}
  \bibinfo{year}{2016}\natexlab{}.
\newblock \showarticletitle{The Price of Anarchy in Large Games}. In
  \bibinfo{booktitle}{\emph{Proceedings of the Forty-Eighth Annual ACM
  Symposium on Theory of Computing}} \emph{(\bibinfo{series}{STOC '16})}.
  \bibinfo{publisher}{Association for Computing Machinery},
  \bibinfo{address}{New York, NY, USA}, \bibinfo{pages}{963–976}.
\newblock
\showISBNx{9781450341325}
\urldef\tempurl%
\url{https://doi.org/10.1145/2897518.2897580}
\showDOI{\tempurl}


\bibitem[\protect\citeauthoryear{Fernandez, Rudov, and Yariv}{Fernandez
  et~al\mbox{.}}{2021}]%
        {fernandez2021centralized}
\bibfield{author}{\bibinfo{person}{Marcelo~A Fernandez},
  \bibinfo{person}{Kirill Rudov}, {and} \bibinfo{person}{Leeat Yariv}.}
  \bibinfo{year}{2021}\natexlab{}.
\newblock \bibinfo{booktitle}{\emph{Centralized Matching with Incomplete
  Information}}.
\newblock \bibinfo{type}{Working Paper} 29043. \bibinfo{institution}{National
  Bureau of Economic Research}.
\newblock
\urldef\tempurl%
\url{https://doi.org/10.3386/w29043}
\showDOI{\tempurl}


\bibitem[\protect\citeauthoryear{Gale and Shapley}{Gale and Shapley}{1962}]%
        {gale1962college}
\bibfield{author}{\bibinfo{person}{David Gale} {and} \bibinfo{person}{Lloyd~S.
  Shapley}.} \bibinfo{year}{1962}\natexlab{}.
\newblock \showarticletitle{College admissions and the stability of marriage}.
\newblock \bibinfo{journal}{\emph{The American Mathematical Monthly}}
  \bibinfo{volume}{69}, \bibinfo{number}{1} (\bibinfo{year}{1962}),
  \bibinfo{pages}{9--15}.
\newblock


\bibitem[\protect\citeauthoryear{Hakimov, K\"{u}bler, and Pan}{Hakimov
  et~al\mbox{.}}{2021}]%
        {hakimov2021costly}
\bibfield{author}{\bibinfo{person}{Rustamdjan Hakimov},
  \bibinfo{person}{Dorothea K\"{u}bler}, {and} \bibinfo{person}{Siqi Pan}.}
  \bibinfo{year}{2021}\natexlab{}.
\newblock \bibinfo{booktitle}{\emph{Costly Information Acquisition in
  Centralized Matching Markets}}.
\newblock \bibinfo{type}{Rationality and Competition Discussion Paper Series}
  280. \bibinfo{institution}{CRC TRR 190 Rationality and Competition}.
\newblock
\urldef\tempurl%
\url{https://ideas.repec.org/p/rco/dpaper/280.html}
\showURL{%
\tempurl}


\bibitem[\protect\citeauthoryear{Hatfield and Milgrom}{Hatfield and
  Milgrom}{2005}]%
        {hatfield2005matching}
\bibfield{author}{\bibinfo{person}{John~William Hatfield} {and}
  \bibinfo{person}{Paul~R. Milgrom}.} \bibinfo{year}{2005}\natexlab{}.
\newblock \showarticletitle{Matching with contracts}.
\newblock \bibinfo{journal}{\emph{American Economic Review}}
  \bibinfo{volume}{95}, \bibinfo{number}{4} (\bibinfo{year}{2005}),
  \bibinfo{pages}{913--935}.
\newblock


\bibitem[\protect\citeauthoryear{He and Magnac}{He and Magnac}{2020}]%
        {he2020application}
\bibfield{author}{\bibinfo{person}{Yinghua He} {and} \bibinfo{person}{Thierry
  Magnac}.} \bibinfo{year}{2020}\natexlab{}.
\newblock \showarticletitle{Application costs and congestion in matching
  markets}.
\newblock  (\bibinfo{year}{2020}).
\newblock


\bibitem[\protect\citeauthoryear{Immorlica, Leshno, Lo, and Lucier}{Immorlica
  et~al\mbox{.}}{2020}]%
        {immorlica2020information}
\bibfield{author}{\bibinfo{person}{Nicole Immorlica}, \bibinfo{person}{Jacob
  Leshno}, \bibinfo{person}{Irene Lo}, {and} \bibinfo{person}{Brendan Lucier}.}
  \bibinfo{year}{2020}\natexlab{}.
\newblock \showarticletitle{Information acquisition in matching markets: The
  role of price discovery}.
\newblock \bibinfo{journal}{\emph{Available at SSRN 3705049}}
  (\bibinfo{year}{2020}).
\newblock


\bibitem[\protect\citeauthoryear{Immorlica, Lucier, Manshadi, and
  Wei}{Immorlica et~al\mbox{.}}{2021}]%
        {immorlica2021designing}
\bibfield{author}{\bibinfo{person}{Nicole Immorlica}, \bibinfo{person}{Brendan
  Lucier}, \bibinfo{person}{Vahideh Manshadi}, {and} \bibinfo{person}{Alexander
  Wei}.} \bibinfo{year}{2021}\natexlab{}.
\newblock \bibinfo{booktitle}{\emph{Designing Approximately Optimal Search on
  Matching Platforms}}.
\newblock \bibinfo{publisher}{Association for Computing Machinery},
  \bibinfo{address}{New York, NY, USA}, \bibinfo{pages}{632--633}.
\newblock
\showISBNx{9781450385541}
\urldef\tempurl%
\url{https://doi.org/10.1145/3465456.3467530}
\showURL{%
\tempurl}


\bibitem[\protect\citeauthoryear{Iwama and Miyazaki}{Iwama and
  Miyazaki}{2008}]%
        {iwama2008survey}
\bibfield{author}{\bibinfo{person}{Kazuo Iwama} {and} \bibinfo{person}{Shuichi
  Miyazaki}.} \bibinfo{year}{2008}\natexlab{}.
\newblock \showarticletitle{A survey of the stable marriage problem and its
  variants}. In \bibinfo{booktitle}{\emph{International Conference on
  Informatics Education}} \emph{(\bibinfo{series}{ICKS 2008})}. IEEE,
  \bibinfo{pages}{131--136}.
\newblock


\bibitem[\protect\citeauthoryear{Kleinberg, Waggoner, and Weyl}{Kleinberg
  et~al\mbox{.}}{2016}]%
        {kleinberg2016descending}
\bibfield{author}{\bibinfo{person}{Robert Kleinberg}, \bibinfo{person}{Bo
  Waggoner}, {and} \bibinfo{person}{E.~Glen Weyl}.}
  \bibinfo{year}{2016}\natexlab{}.
\newblock \showarticletitle{Descending Price Optimally Coordinates Search}. In
  \bibinfo{booktitle}{\emph{Proceedings of the 2016 ACM Conference on Economics
  and Computation}} (Maastricht, The Netherlands) \emph{(\bibinfo{series}{EC
  '16})}. \bibinfo{publisher}{Association for Computing Machinery},
  \bibinfo{address}{New York, NY, USA}, \bibinfo{pages}{23–24}.
\newblock
\showISBNx{9781450339360}
\urldef\tempurl%
\url{https://doi.org/10.1145/2940716.2940760}
\showDOI{\tempurl}


\bibitem[\protect\citeauthoryear{Koopmans and Beckmann}{Koopmans and
  Beckmann}{1957}]%
        {koopmans1957assignment}
\bibfield{author}{\bibinfo{person}{Tjalling~C Koopmans} {and}
  \bibinfo{person}{Martin Beckmann}.} \bibinfo{year}{1957}\natexlab{}.
\newblock \showarticletitle{Assignment problems and the location of economic
  activities}.
\newblock \bibinfo{journal}{\emph{Econometrica: journal of the Econometric
  Society}} (\bibinfo{year}{1957}), \bibinfo{pages}{53--76}.
\newblock


\bibitem[\protect\citeauthoryear{Lucier and Syrgkanis}{Lucier and
  Syrgkanis}{2015}]%
        {lucier2015greedy}
\bibfield{author}{\bibinfo{person}{Brendan Lucier} {and}
  \bibinfo{person}{Vasilis Syrgkanis}.} \bibinfo{year}{2015}\natexlab{}.
\newblock \showarticletitle{Greedy algorithms make efficient mechanisms}. In
  \bibinfo{booktitle}{\emph{Proceedings of the Sixteenth ACM Conference on
  Economics and Computation}}. \bibinfo{pages}{221--238}.
\newblock


\bibitem[\protect\citeauthoryear{Marshall}{Marshall}{1920}]%
        {marshall1920principles}
\bibfield{author}{\bibinfo{person}{Alfred Marshall}.}
  \bibinfo{year}{1920}\natexlab{}.
\newblock \bibinfo{booktitle}{\emph{Principles of economics, eighth edition}}.
\newblock


\bibitem[\protect\citeauthoryear{Plott, Roy, and Tong}{Plott
  et~al\mbox{.}}{2013}]%
        {plott2013marshall}
\bibfield{author}{\bibinfo{person}{Charles Plott}, \bibinfo{person}{Nilanjan
  Roy}, {and} \bibinfo{person}{Baojia Tong}.} \bibinfo{year}{2013}\natexlab{}.
\newblock \showarticletitle{Marshall and {W}alras, disequilibrium trades and
  the dynamics of equilibration in the continuous double auction market}.
\newblock \bibinfo{journal}{\emph{Journal of Economic Behavior \&
  Organization}}  \bibinfo{volume}{94} (\bibinfo{year}{2013}),
  \bibinfo{pages}{190 -- 205}.
\newblock
\showISSN{0167-2681}
\urldef\tempurl%
\url{https://doi.org/10.1016/j.jebo.2012.12.002}
\showDOI{\tempurl}


\bibitem[\protect\citeauthoryear{Roughgarden, Syrgkanis, and
  Tardos}{Roughgarden et~al\mbox{.}}{2017}]%
        {roughgarden2017price}
\bibfield{author}{\bibinfo{person}{Tim Roughgarden}, \bibinfo{person}{Vasilis
  Syrgkanis}, {and} \bibinfo{person}{Eva Tardos}.}
  \bibinfo{year}{2017}\natexlab{}.
\newblock \showarticletitle{The price of anarchy in auctions}.
\newblock \bibinfo{journal}{\emph{Journal of Artificial Intelligence Research}}
   \bibinfo{volume}{59} (\bibinfo{year}{2017}), \bibinfo{pages}{59--101}.
\newblock


\bibitem[\protect\citeauthoryear{Waggoner and Weyl}{Waggoner and Weyl}{2019}]%
        {waggoner2019matching}
\bibfield{author}{\bibinfo{person}{Bo Waggoner} {and} \bibinfo{person}{E~Glen
  Weyl}.} \bibinfo{year}{2019}\natexlab{}.
\newblock \showarticletitle{Matching Markets via Descending Price}.
\newblock \bibinfo{journal}{\emph{Available at SSRN 3373934}}
  (\bibinfo{year}{2019}).
\newblock


\bibitem[\protect\citeauthoryear{Weitzman}{Weitzman}{1979}]%
        {weitzman1979optimal}
\bibfield{author}{\bibinfo{person}{Martin~L. Weitzman}.}
  \bibinfo{year}{1979}\natexlab{}.
\newblock \showarticletitle{Optimal search for the best alternative}.
\newblock \bibinfo{journal}{\emph{Econometrica}} \bibinfo{volume}{47},
  \bibinfo{number}{3} (\bibinfo{year}{1979}), \bibinfo{pages}{641--654}.
\newblock


\end{thebibliography}


\end{document}



